\documentclass[aps,pra,10pt,superscriptaddress,twocolumn,floatfix]{revtex4-2}
\usepackage{graphicx}
\usepackage{amsmath,amsthm,amssymb,dsfont}
\usepackage{xcolor}
\usepackage{soul}
\soulregister{\cite}{7}\soulregister{\cref}{7}\soulregister{\Cref}{7}\soulregister{\ref}{7}
\usepackage{orcidlink}
\usepackage{tikz}
\newcommand{\ket}[1]{\left| #1 \right\rangle}
\newcommand{\bra}[1]{\left\langle #1 \right|}
\newcommand{\proj}[1]{\ket{#1}\hskip-1mm\bra{#1}}
\newcommand{\av}[1]{\langle #1\rangle}
\newcommand{\braket}[2]{\langle #1|#2 \rangle}

\usepackage{hyperref}
\usepackage[capitalise]{cleveref}
\crefname{equation}{Eq.}{Eqs.}
\crefname{figure}{Fig.}{Figs.}
\crefname{section}{Sec.}{Secs.}
\Crefname{section}{Sec.}{Secs.}
\crefname{appendix}{App.}{Apps.}
\Crefname{appendix}{App.}{Apps.}
\crefname{table}{Table}{Tables}
\newcommand{\idop}{\hat{\mathds{1}}}
\newtheorem{proposition}{Proposition}
\usepackage[authormarkuptext=name,commentmarkup=uwave]{changes}
\definechangesauthor[name={JRH},color=orange!70!black]{Jonte}

\begin{document}

\title{Frame dependence of Kochen-Specker contextuality for relativistic spin systems}
\author{Jonte R. Hance\,\orcidlink{0000-0001-8587-7618}}
\email{jonte.hance@newcastle.ac.uk}
\affiliation{Quantum Group, School of Computing, Newcastle University, 1 Science Square, Newcastle upon Tyne, NE4 5TG, UK}
\author{Ruben Campos Delgado\, \orcidlink{0000-0002-9957-3604}}
\email{ruben.camposdelgado@itp.uni-hannover.de}
\affiliation{Institut f\"ur Theoretische Physik, Leibniz Universit\"at Hannover, 30167 Hannover, Germany}

\begin{abstract}
We ask whether Kochen-Specker contextuality is Lorentz invariant. On the full Hilbert space, it is, since boosts act unitarily. Restricted to spin, it is not: momentum acts as an environment applying momentum-dependent Wigner rotations, leaving spin observables unsharp and breaking the operator identities behind state-independent contextuality. For spin $1/2$ one scalar governs the loss, the packet average $\Delta$ of $\sin^2(\theta_W/2)$ (where $\theta_W$ is the Wigner angle); for any spin, a ray degrades at a rate set by its spin variance transverse to the boost. In the spin-only account, the Free Will Theorem's SPIN axiom fails for every $\Delta>0$; the Yu-Oh set loses contextuality between $\Delta\approx0.008$ and $0.016$, with state independence going first; the Peres-Mermin square loses contextuality at $\Delta\approx0.059$ or $0.074$ by orientation; singlet CHSH loses nonlocality for $\Delta\geq0.159$; and a GHZ-state OR gate loses Raussendorf's contextuality certificate at $\Delta\approx0.206$. A $q$-plate implements the channel exactly on photon polarisation, with a retardation in place of the rapidity, so the nonlocality, GHZ and aimed-source results are testable without motion.
\end{abstract}

\maketitle

\section{Introduction}\label{sec:intro}

The Kochen-Specker (KS) theorem shows that the outcomes of quantum measurements cannot be assigned independently of the measurement context: for Hilbert spaces of dimension three or more, no map from self-adjoint operators to values can respect the functional relations between commuting observables \cite{kochen1967problem,bell1966problem,budroni2022kochen}. This form of contextuality is a property of the observables alone. In its state-independent form, every two-qubit state violates the Peres-Mermin noncontextuality inequality by the same amount \cite{peres1990incompatible,mermin1990simple,cabello2008experimentally}. KS contextuality therefore differs~\cite{bozzetto2026warringcontextualitiesprovably} from the operational notion introduced by Spekkens \cite{spekkens2005contextuality}, which asks whether the probabilities assigned to preparations and (possibly unsharp) measurements admit a noncontextual ontological model.

One of us recently showed that Spekkens' contextuality is, in general, not Lorentz invariant when restricted to spin degrees of freedom~\cite{camposdelgado2026frame}: a set of spin states which is contextual in one inertial frame can be noncontextual in another, because the boost entangles spin with momentum, and tracing out the momentum turns pure, linearly dependent spin states into mixed, linearly independent ones. Given that mechanism affects the states, one might not expect it to change KS contextuality, which no choice of states can turn off. We therefore ask here whether KS contextuality is frame dependent as well, and through what mechanism.

On the full Hilbert space of spin and momentum together, KS contextuality is Lorentz invariant for the same reason as Spekkens' contextuality~\cite{camposdelgado2026frame}: boosts act unitarily, and unitaries preserve both the spectral structure of observables and every functional relation between them. However, we can also ask what happens when observers restrict their attention to spin. For both Spekkens and KS contextuality, the channel formed by the boost followed by the momentum trace does affect the contextuality of the system; what differs is which object the notion is defined on. In the Schr\"odinger picture the channel mixes the states, which is the mechanism causing the change in Spekkens contextuality; in the Heisenberg picture, the channel turns the sharp spin observables of one frame into unsharp observables in the other, which causes the change in KS contextuality. The operator identities between the nine Peres-Mermin observables, which the KS argument needs, fail for their unsharp images, by an amount set by the momentum spread of the wave packets and the rapidity of the boost.

Two earlier discussions inform our result. One is the finite-precision debate around the KS theorem. Meyer, Kent and Clifton~\cite{meyer1999finite,kent1999noncontextual,clifton2000simulating} built noncontextual models reproducing quantum statistics whenever the specification of which sharp observable is measured carries a finite error; Barrett and Kent~\cite{barrett2004non} defend those models while judging the loophole they open logically possible but physically implausible; Cabello \cite{cabello2002finite} argues that finite-precision unsharpness does not touch the theorem's physical content, and Spekkens \cite{spekkens2014status} traces much of the difficulty to assuming outcome determinism rather than deriving it. That debate concerns imprecision in the choice of a sharp observable. Ours concerns observables that are genuinely unsharp, their effects not projectors, for which the relevant treatment is Krishna \emph{et al.}'s robust noncontextuality inequalities~\cite{krishna2017deriving}; boosts provide a physical mechanism generating such unsharpness, set by rapidity and packet width. The second discussion leads from our previous work: one of us (with collaborators) recently argued that the elements of a positive operator-valued measure describe correlations between a system and quantum fluctuations of its environment, which select the measurement context outcome by outcome~\cite{hance2026external}. Boosted spin measurements realise this directly: the particle's own momentum wave function acts as the environment, with the Wigner rotation being the momentum-controlled unitary, meaning the rapidity sets the coupling.

Relativistic effects on spin correlations are well studied for nonlocality. Peres, Scudo and Terno showed that the reduced spin state of a single particle is not covariant, so its entropy is not a relativistic invariant~\cite{peres2002quantum,peres2004quantum}; spin-spin entanglement transfers to and from the momentum sector under boosts~\cite{gingrich2002quantum,gingrich2003entangled}; and the degree of Bell violation depends on the observer's motion~\cite{czachor1997einstein,terashima2003einstein,terashima2003relativistic}. Terashima and Ueda showed that, for a sharp momentum, the Wigner rotation is a rotation the observer can undo by rotating their settings~\cite{terashima2003relativistic}; a genuine loss therefore needs a momentum spread, which our channel is built on. For contextuality, Chen \emph{et al.} showed that a relativistic hydrogen atom violates a CHSH-type noncontextuality inequality on its four-dimensional spinor space, within a fixed frame; for a free Dirac electron and a Dirac oscillator the same algebra violates it, but the observables mix positive- and negative-energy states and are not measurable~\cite{chen2013quantum}. The non-relativistic analogue of the thresholds below was recently mapped out by Dutra \emph{et al.} for a global depolarising channel~\cite{dutra2026contextuality}; ours acts per particle, so the numbers differ. To our knowledge, the behaviour of KS contextuality under a change of inertial frame has not been analysed, and Ref.~\cite{camposdelgado2026frame} is the nearest point of comparison. (Generalised contextuality has separately been extended to continuous-variable systems \cite{jokinen2026generalised}, but without reference to frames.)

\Cref{sec:channel} sets up the boost-induced spin channel and fixes measurement conventions, and \cref{sec:invariance} shows KS contextuality is Lorentz-invariant on the full Hilbert space. \Cref{sec:pm} works out the boosted Peres-Mermin square: the state-independent sequential value $\chi(\Delta)$, its threshold, which orientations of the square keep commuting contexts, and the success probability of a GHZ-state measurement-based computation. \Cref{sec:arbitrary} generalises the channel to arbitrary spin in closed form, shows that every ray degrades at a rate set by its planar spin variance, quantifies the fragility of the Yu-Oh set, and asks what survives of the Free Will Theorem's SPIN axiom. \Cref{sec:photons} models massless particles, where collinear boosts leave polarisation contextuality untouched and transverse boosts dephase it. \Cref{sec:bell} shows the consequences of our results for Bell tests and device-independent certification, and \cref{sec:aimed} shows that momentum-correlated sources relocate the frame in which certification is ideal, for massive particles and, near-implementably, for photons. \Cref{sec:emulation} shows a $q$-plate can emulate the massive channel on photon polarisation, so that the nonlocality and aimed-source results can be tested with a retardation in place of a rapidity. \Cref{sec:context} interprets the mechanism as context selection by the momentum environment, and \cref{sec:discussion} closes with experimental scales and open directions.

\section{Relativistic qubits and the boost channel}\label{sec:channel}

We use the standard description of massive particles with spin \cite{wigner1939unitary,weinberg1995quantum}, in the conventions of Ref.~\cite{camposdelgado2026frame}. A particle of mass $m$ has four-momentum $p^\mu=(E,\mathbf{p})$ with $E=\sqrt{m^2+|\mathbf{p}|^2}$, and momentum eigenstates normalised as $\braket{p}{p'}=(2\pi)^3\,2E\,\delta^3(\mathbf{p}-\mathbf{p}')$, with invariant measure $d\mu(p)=d^3\mathbf{p}/[(2\pi)^3 2E]$. Under a Lorentz transformation $\Lambda$,
\begin{equation}
\hat{U}(\Lambda)\ket{\sigma,p}=\sum_{\sigma'}\hat{D}^{(1/2)}_{\sigma'\sigma}[W(\Lambda,p)]\ket{\sigma',\Lambda p},
\label{eq:translaw}
\end{equation}
where
\begin{equation}\label{eq:def_wigner}
W(\Lambda, p) = L^{-1}(\Lambda p)\,\Lambda\, L(p)
\end{equation}
is the Wigner rotation, with $L(p)$ the standard boost taking the rest four-momentum $(m,\mathbf{0})$ to $(E,\mathbf{p})$. For a boost with rapidity $\zeta$ along $z$, and momentum in spherical coordinates $(p,\theta,\varphi)$, the spin-$1/2$ representation is \cite{Halpern1968,gingrich2002quantum,camposdelgado2026frame}
\begin{equation}
\hat{D}^{(1/2)}=\begin{pmatrix}c_W & s_W e^{-i\varphi}\\ -s_W e^{i\varphi} & c_W\end{pmatrix},
\label{eq:wignermatrix}
\end{equation}
with $c_W=\cos(\theta_W/2)$ and $s_W=\sin(\theta_W/2)$ given by
\begin{align}
c_W&=\sqrt{\frac{E+m}{E'+m}}\left[\cosh\left(\frac{\zeta}{2}\right)+\frac{p\cos\theta}{E+m}\sinh\left(\frac{\zeta}{2}\right)\right],\label{eq:alpha}\\
s_W&=\frac{p\sin\theta}{\sqrt{(E+m)(E'+m)}}\sinh\left(\frac{\zeta}{2}\right),\label{eq:beta}
\end{align}
and $E'=E\cosh\zeta+p\cos\theta\sinh\zeta$, so that $c_W^2+s_W^2=1$. \Cref{eq:wignermatrix} is thus a rotation by the Wigner angle $\theta_W$ about $\hat{n}=(\sin\varphi,-\cos\varphi,0)\propto\hat{p}\times\hat{z}$, orthogonal to both the boost and the momentum:
\begin{equation}
\hat{D}^{(1/2)}=\exp\left[-i\tfrac{\theta_W}{2}\,\hat{n}\cdot\boldsymbol{\sigma}\right].
\label{eq:rotform}
\end{equation}

For a particle prepared in a product of a pure spin state $\hat{\rho}=\proj{\chi}$ and a momentum wave packet, $\ket{\psi}=\int d\mu(p)\,\psi(p)\ket{\chi}\otimes\ket{p}$ with $\int d\mu(p)|\psi(p)|^2=1$, applying \cref{eq:translaw} and tracing out the momentum gives the spin state seen by the boosted observer, 
\begin{equation}
\Lambda_\zeta(\hat{\rho})=\int d\mu(p)\,|\psi(p)|^2\,\hat{D}(p)\hat{\rho}\hat{D}(p)^\dagger ,
\label{eq:channel}
\end{equation}
a random-rotation channel: trace preserving, unital, and self-dual whenever the packet is azimuthally symmetric, since $\hat{n}(\varphi+\pi)=-\hat{n}(\varphi)$ at fixed $\theta_W$ then pairs each rotation with its inverse at equal weight. We take $|\psi(p)|^2\propto e^{-p^2/2\sigma^2}$, a Gaussian of width $\sigma$ centred on the rest frame of the devices.

Averaging the channel \cref{eq:channel} over the azimuthal angle of $\hat{n}$ leaves it diagonal in the Pauli basis,
\begin{equation}
\Lambda_\zeta(\hat{\sigma}_z)=\eta_z\,\hat{\sigma}_z,\qquad
\Lambda_\zeta(\hat{\sigma}_{x,y})=\eta_\perp\,\hat{\sigma}_{x,y},
\label{eq:etas}
\end{equation}
with both coefficients fixed by a single scalar,
\begin{gather}
\eta_z=1-2\Delta,\qquad \eta_\perp=1-\Delta,\nonumber\\
\Delta=\int d\mu(p)\,|\psi(p)|^2\,\sin^2(\theta_W/2) .
\label{eq:Delta}
\end{gather}
This packet average of $\sin^2(\theta_W/2)$ is the only quantity through which the boost enters \cref{sec:pm,sec:bell}'s spin-$1/2$ calculations. (Spin $s$ needs the higher moments $\Delta_j=\langle\sin^{2j}(\theta_W/2)\rangle$ up to $j=2s$, \cref{sec:arbitrary}, and the massless case has no $\Delta$ at all, \cref{sec:photons}.) $\Delta$ vanishes at $\zeta=0$, grows monotonically with rapidity, and saturates as $\zeta\to\infty$ at
\begin{equation}
\Delta_\infty=\int d\mu(p)\,|\psi(p)|^2\,\frac{p^2\sin^2\theta}{2(E+m)(E+p\cos\theta)}
\;\approx\;\frac{\sigma^2}{2m^2}
\label{eq:Deltainf}
\end{equation}
for $\sigma\ll m$. Churilov characterised the spin channels reachable at a fixed boost for inversion-symmetric momentum distributions, a family that contains the Pauli-diagonal members $\mathrm{diag}(1-a,1-a,1-2a)$~\cite{churilov2026fixed}; spherically symmetric packets land on those members with $a=\Delta$, so $\eta_z=2\eta_\perp-1$. Narrow wave packets thus bound the effect of any boost, however fast the observer moves: the spin sector is degraded by relativistic momentum spread, not by relativistic velocity.

A remark on conventions. Which operator a detector on a moving particle measures is not settled: candidate relativistic spin operators (Foldy-Wouthuysen, Pryce, and Pauli-Lubanski based constructions, among others) differ for wave packets~\cite{bauke2014relativistic}, and the operational reading of Wigner rotations in Bell-type experiments has been questioned~\cite{saldanha2012physical}. We adopt the convention of Ref.~\cite{camposdelgado2026frame}: states are expanded in the Wigner basis, a measurement counts as sharp in the frame where its device is at rest, and \cref{eq:translaw} describes the system in every other frame. Our results compare descriptions across frames within this one convention. The competing operator choices coincide for narrow packets at rest and differ at the same order in momentum spread that generates $\Delta$. An $\mathcal{O}(\sigma^2/m^2)$ ambiguity in the operator is thus the size of the whole effect, so the channel coefficients, and every width and rapidity quoted below, are convention-dependent even though the mechanism, momentum-conditioned rotation followed by a trace, is not.

\section{Invariance on the full Hilbert space}\label{sec:invariance}

\begin{proposition}\label{prop:invariance}
Let $\{\hat{A}_i\}$ be a set of self-adjoint operators on $\mathcal{H}=\mathcal{H}_{\rm spin}\otimes\mathcal{H}_{\rm momentum}$ admitting no noncontextual value assignment (respecting the functional relations between commuting members). Then for any Lorentz transformation $\Lambda$, the set $\{\hat{U}(\Lambda)\hat{A}_i\hat{U}(\Lambda)^\dagger\}$ admits no noncontextual value assignment either, and the two sets give identical measurement statistics on correspondingly transformed states.
\end{proposition}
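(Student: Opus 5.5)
The plan is to prove the contrapositive by transporting value assignments along the unitary conjugation. Write $\hat U\equiv\hat U(\Lambda)$, which is unitary on $\mathcal{H}$ by \cref{eq:translaw} together with the Lorentz invariance of $d\mu(p)$. Let $\hat B_i=\hat U\hat A_i\hat U^\dagger$. Suppose $v$ is a noncontextual value assignment on $\{\hat B_i\}$. We will define $v'(\hat A_i):=v(\hat B_i)$ and show that $v'$ is a noncontextual value assignment on $\{\hat A_i\}$. This contradicts the hypothesis, so no such $v$ exists.

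For this we need two structural facts about conjugation $\mathrm{Ad}_{\hat U}:\hat X\mapsto \hat U\hat X\hat U^\dagger$.

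First, it preserves spectra and spectral projectors: $\hat A_i=\sum_a a\,\hat P_a$ implies $\hat B_i=\sum_a a\,\hat U\hat P_a\hat U^\dagger$, with the same eigenvalues. For unbounded operators one uses the spectral measure $E_{\hat B}(\cdot)=\hat U E_{\hat A}(\cdot)\hat U^\dagger$. The spectrum rule $v(\hat A)\in\mathrm{spec}(\hat A)$ therefore transfers.

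Second, it preserves commutation, since $[\hat B_i,\hat B_j]=\hat U[\hat A_i,\hat A_j]\hat U^\dagger$. So the contexts, i.e.\ the commuting subsets, correspond bijectively.

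Third, it preserves functional relations, $f(\hat B_1,\dots,\hat B_k)=\hat U f(\hat A_1,\dots,\hat A_k)\hat U^\dagger$ for commuting families. This holds for polynomials because $\mathrm{Ad}_{\hat U}$ is a $*$-automorphism. It extends to Borel functions through the joint spectral measure, which is conjugated by $\hat U$ in the same way. Hence $\hat A_k=f(\hat A_i,\hat A_j)$ iff $\hat B_k=f(\hat B_i,\hat B_j)$. Then $v(\hat B_k)=f(v(\hat B_i),v(\hat B_j))$ gives the same relation for $v'$.

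Since the argument is symmetric in $\hat U$ and $\hat U^\dagger$, contextuality is in fact equivalent for the two sets.

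For the statistical claim, set $\hat\rho'=\hat U\hat\rho\hat U^\dagger$. Cyclicity of the trace then gives $\mathrm{Tr}[\hat\rho'\,\hat U\hat P_a\hat U^\dagger]=\mathrm{Tr}[\hat\rho\hat P_a]$. For sequential or joint measurements within a context, the same identity holds for products of projectors, $\hat U\hat P_{a}\hat Q_{b}\hat U^\dagger=(\hat U\hat P_a\hat U^\dagger)(\hat U\hat Q_b\hat U^\dagger)$, together with the conjugated Lüders updates. So all outcome distributions coincide.

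The main obstacle is the technical one of treating unbounded self-adjoint operators on the infinite-dimensional momentum space. Examples are momentum itself, or position-like observables with continuous spectrum. There "functional relation" must be read via the joint spectral measure and domains must be checked: $\hat U$ maps $\mathrm{dom}(\hat A)$ onto $\mathrm{dom}(\hat B)$. I would handle this by working throughout with spectral measures, i.e.\ projection-valued measures, rather than the operators themselves. On projections, $\mathrm{Ad}_{\hat U}$ is a lattice isomorphism preserving orthogonality, sums and commutation, and a value assignment reduces to a $\{0,1\}$-valued homomorphism on each commuting family of projectors. Once that reduction is made, the bounded case above applies verbatim. The only remaining input is the unitarity of $\hat U(\Lambda)$, which the paper takes from the Wigner representation.
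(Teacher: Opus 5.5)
Your proposal is correct and follows the paper's own argument: unitarity of $\hat U(\Lambda)$ preserves spectra, commutation and the functional relations among the $\hat A_i$, so a value assignment for the conjugated set pulls back to one for the original set, and cyclicity of the trace gives identical statistics on $\hat U\hat\rho\hat U^\dagger$. Your use of Borel functional calculus and projection-valued measures for unbounded operators makes rigorous what the paper states in one line through polynomial identities.
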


The proof of this comes from the fact $\hat{U}(\Lambda)$ is unitary, so it preserves commutation relations, spectra, and every polynomial identity among the $\hat{A}_i$, and a value assignment for the transformed set would pull back to one for the original set. The invariance of the statistics is the invariance of transition probabilities under \cref{eq:translaw}. The corresponding statement holds for Spekkens' contextuality when spin and momentum are kept together~\cite{camposdelgado2026frame}.

This has two consequences. First, KS contextuality of the total system is a Lorentz-invariant notion. Second, any observer running the Peres-Mermin protocol with devices at rest in their own frame obtains the ideal value, whatever their motion relative to the source, since the proposition applies to their comoving observables. Frame dependence can only appear in the relation between one observer's spin sector and another's devices.

\section{The Peres-Mermin square for a boosted observer}\label{sec:pm}

\emph{The square at rest---}
Consider two spin-$1/2$ particles, each in a product of a spin state and a Gaussian wave packet as in \cref{sec:channel}, with devices at rest in the lab frame. The Peres-Mermin square \cite{peres1990incompatible,mermin1990simple} is the array of nine two-qubit observables
\begin{equation}
\begin{array}{ccc}
\hat{A}_{11}=\hat{\sigma}_z\otimes\idop & \hat{A}_{12}=\idop\otimes\hat{\sigma}_z & \hat{A}_{13}=\hat{\sigma}_z\otimes\hat{\sigma}_z\\
\hat{A}_{21}=\idop\otimes\hat{\sigma}_x & \hat{A}_{22}=\hat{\sigma}_x\otimes\idop & \hat{A}_{23}=\hat{\sigma}_x\otimes\hat{\sigma}_x\\
\hat{A}_{31}=\hat{\sigma}_z\otimes\hat{\sigma}_x & \hat{A}_{32}=\hat{\sigma}_x\otimes\hat{\sigma}_z & \hat{A}_{33}=\hat{\sigma}_y\otimes\hat{\sigma}_y
\end{array}
\label{eq:square}
\end{equation}
Here each row $\hat{R}_i$ and each column $\hat{C}_j$ is a triple of mutually commuting dichotomic observables; the product of each triple is $+\idop$, except the third column, whose product is $-\idop$. Measuring the triples sequentially and summing the products of outcomes,
\begin{equation}
\chi=\av{\hat{R}_1}+\av{\hat{R}_2}+\av{\hat{R}_3}+\av{\hat{C}_1}+\av{\hat{C}_2}-\av{\hat{C}_3},
\label{eq:chi}
\end{equation}
quantum mechanics gives $\chi=6$ for every two-qubit state, while any noncontextual assignment of outcomes obeys $\chi\leq 4$ \cite{cabello2008experimentally,guhne2010compatibility}. That bound is the one for deterministic noncontextual assignments to dichotomic procedures. The boosted square could equally be tested against Krishna \emph{et al.}'s robust generalised-noncontextuality inequalities~\cite{krishna2017deriving}, which do not presuppose sharpness; we leave that comparison open.

\emph{The square seen from a moving frame---}
Now describe the same nine measurements from a frame boosted with rapidity $\zeta$ along $z$. As per \cref{prop:invariance}, over the full Hilbert space nothing changes. Restricted to spin, each sharp effect $\hat{\Pi}_\pm=(\idop\pm \hat{A})/2$ of the lab devices becomes correlated with momentum through the Wigner rotation, and conditioning on the momentum wave functions (in the sense of Ref.~\cite{hance2026external}) yields the unsharp spin effects
\begin{equation}
\tilde{\Pi}_\pm=\tfrac{1}{2}\left(\idop\pm c_A\hat{A}\right),\qquad
\tilde{A}=(\Lambda_\zeta^\dagger\otimes\Lambda_\zeta^\dagger)(\hat{A})=c_A\hat{A} ,
\label{eq:noisyeffects}
\end{equation}
where $c_A$ is the product of one factor of $\eta_z$ or $\eta_\perp$, from \cref{eq:etas}, for each nontrivial Pauli factor of $\hat{A}$. The nine boosted observables retain their eigenbases, their pairwise commutation, and the labels of the square, but lose their sharpness, and with it the algebra. The row and column products are no longer $\pm\idop$:
\begin{equation}
\tilde{A}_{11}\tilde{A}_{12}\tilde{A}_{13}=\eta_z^4\,\idop\neq\idop ,
\end{equation}
and similarly for the other five triples (\cref{fig:square}). The KS argument, which multiplies the six identities together to reach $+1=-1$, no longer holds, and not because the products have shrunk: an assignment $v(\tilde{A}_{ij})=\pm c_{ij}$ obeying multiplicativity on commuting triples would meet the same parity contradiction. What fails instead is the warrant for such an assignment. The boosted effects are not projectors, the recorded outcomes $\pm1$ are not eigenvalues of $\tilde{A}_{ij}$, and outcome determinism, which the KS argument assumes, has nothing to attach to \cite{spekkens2014status}. This is the situation Krishna \emph{et al.} analyse for a square of unsharp observables \cite{krishna2017deriving}; theirs arises from imperfect devices, ours from the kinematics of the observer.

\begin{figure}[tb]
\centering
\begin{tikzpicture}[every node/.style={font=\small}]
\begin{scope}
\foreach \i in {0,...,3} \draw[gray!60] (0.85*\i,0) -- (0.85*\i,2.1);
\foreach \j in {0,...,3} \draw[gray!60] (0,0.7*\j) -- (2.55,0.7*\j);
\draw[thick] (0,0) rectangle (2.55,2.1);
\node at (0.425,1.75) {$\hat{Z}\idop$};
\node at (1.275,1.75) {$\idop \hat{Z}$};
\node at (2.125,1.75) {$\hat{Z}\hat{Z}$};
\node at (0.425,1.05) {$\idop \hat{X}$};
\node at (1.275,1.05) {$\hat{X}\idop$};
\node at (2.125,1.05) {$\hat{X}\hat{X}$};
\node at (0.425,0.35) {$\hat{Z}\hat{X}$};
\node at (1.275,0.35) {$\hat{X}\hat{Z}$};
\node at (2.125,0.35) {$\hat{Y}\hat{Y}$};
\node[anchor=west] at (2.7,1.75) {$+\idop$};
\node[anchor=west] at (2.7,1.05) {$+\idop$};
\node[anchor=west] at (2.7,0.35) {$+\idop$};
\node at (0.425,-0.35) {$+\idop$};
\node at (1.275,-0.35) {$+\idop$};
\node at (2.125,-0.35) {$-\idop$};
\node at (1.7,2.5) {rest frame};
\end{scope}
\begin{scope}[yshift=-3.6cm]
\foreach \i in {0,...,3} \draw[gray!60] (1.55*\i,0) -- (1.55*\i,2.1);
\foreach \j in {0,...,3} \draw[gray!60] (0,0.7*\j) -- (4.65,0.7*\j);
\draw[thick] (0,0) rectangle (4.65,2.1);
\node at (0.775,1.75) {$\eta_z\, \hat{Z}\idop$};
\node at (2.325,1.75) {$\eta_z\, \idop \hat{Z}$};
\node at (3.875,1.75) {$\eta_z^2\, \hat{Z}\hat{Z}$};
\node at (0.775,1.05) {$\eta_\perp\, \idop \hat{X}$};
\node at (2.325,1.05) {$\eta_\perp\, \hat{X}\idop$};
\node at (3.875,1.05) {$\eta_\perp^2\, \hat{X}\hat{X}$};
\node at (0.775,0.35) {$\eta_z\eta_\perp\, \hat{Z}\hat{X}$};
\node at (2.325,0.35) {$\eta_\perp\eta_z\, \hat{X}\hat{Z}$};
\node at (3.875,0.35) {$\eta_\perp^2\, \hat{Y}\hat{Y}$};
\node[anchor=west] at (4.8,1.75) {$+\eta_z^4\,\idop$};
\node[anchor=west] at (4.8,1.05) {$+\eta_\perp^4\,\idop$};
\node[anchor=west] at (4.8,0.35) {$+\eta_z^2\eta_\perp^4\,\idop$};
\node at (0.775,-0.35) {$+\eta_z^2\eta_\perp^2\,\idop$};
\node at (2.325,-0.35) {$+\eta_z^2\eta_\perp^2\,\idop$};
\node at (3.875,-0.35) {$-\eta_z^2\eta_\perp^4\,\idop$};
\node at (2.8,2.5) {boosted frame};
\end{scope}
\end{tikzpicture}
\caption{The Peres-Mermin square described at rest relative to the devices (top) and by a boosted observer after tracing out momentum (bottom), with $\hat{Z}\idop$ shorthand for $\hat{\sigma}_z\otimes\idop$ and so on. The operator products of each triple appear beside its row and below its column; in the boosted frame they acquire factors of $\eta_z=1-2\Delta$ and $\eta_\perp=1-\Delta$, so the contradiction between the six products disappears.}
\label{fig:square}
\end{figure}

\emph{The sequential value---}
To quantify how much contextuality remains, let the boosted observer implement the unsharp observables \cref{eq:noisyeffects} sequentially with L\"uders (square-root) instruments, the standard interpretation of a sequential contextuality test~\cite{guhne2010compatibility}. We assume throughout that a spin measurement leaves the momentum undisturbed, without which a triple could not be measured on one particle. Within one triple the observables commute, so the instruments commute and the correlation factorises,
\begin{equation}
\av{abc}=c_A c_B c_C\,\mathrm{Tr}[\hat{\rho}\,\hat{A}\hat{B}\hat{C}]=\pm\,c_A c_B c_C ,
\label{eq:factorise}
\end{equation}
for every spin state $\hat{\rho}$, so at fixed momentum profile the boosted value inherits the state independence of the ideal test. It is not independent of the momentum-sector state, a freedom \cref{sec:aimed} exploits. Summing the six triples with the signs from \cref{eq:chi},
\begin{equation}
\begin{split}
\chi&(\Delta)=\\
&\left[(1-2\Delta)^2+(1-\Delta)^2\right]^2+2(1-2\Delta)^2(1-\Delta)^4 .
\label{eq:chiDelta}
\end{split}
\end{equation}
This equals $6$ at $\Delta=0$, decreases monotonically on $[0,\tfrac12]$, and crosses the noncontextuality bound $\chi=4$ at
\begin{equation}
\Delta_c\approx 0.058542 ,
\label{eq:Deltac}
\end{equation}
the smallest root of $8\Delta^6-40\Delta^5+107\Delta^4-148\Delta^3+108\Delta^2-40\Delta+2$.

Whether an experiment reaches $\Delta_c$ depends on width and rapidity through $\Delta(\zeta,\sigma/m)$. From \cref{eq:Deltainf}, the saturation value $\Delta_\infty$ exceeds $\Delta_c$ only for
\begin{equation}
\sigma\gtrsim\sigma^*\approx 0.42\,m .
\label{eq:sigmastar}
\end{equation}
For narrower packets the boosted observer sees a reduced but still contextual value at every rapidity; for wider packets there is a finite critical rapidity $\zeta_c(\sigma/m)$ beyond which the sequential L\"uders reading of the lab effects on the boosted spin sector satisfies the Peres-Mermin inequality. Representative values are $\zeta_c\approx 2.82$ at $\sigma=0.5\,m$, $1.80$ at $\sigma=0.7\,m$, and $1.37$ at $\sigma=m$; \cref{fig:chi,fig:critical} show $\chi$ against rapidity and the critical curve.

Beyond the critical curve the maximally mixed state's whole boosted behaviour admits a noncontextual model, because for that state the Peres-Mermin inequality is the only facet of the noncontextual polytope that can be violated (Appendix~\ref{app:ncmodel}). For other spin states the singles and in-context pairs no longer vanish and determining whether they allow a noncontextual model would require a different linear program, which we have not solved.

Consider what $\chi(\Delta)$ means. A lab spin measurement acts as $\hat{\Pi}\otimes\idop_{\rm mom}$, leaving the momentum undisturbed, and the boosted observer describes its two effects at each $p$ by the sharp projectors $\hat{D}(p)\hat{\Pi}_\pm \hat{D}(p)^\dagger$, so the observable itself becomes $\hat{A}(p)=\hat{D}(p)\hat{A}\hat{D}(p)^\dagger$. Measuring a triple on the two-particle state, summing the products of outcomes, and averaging over the packet only at the end, with $\hat{A}(p)$ shorthand for the product of the two single-particle conjugations,
\begin{equation}
\int\! d\mu(p)|\psi(p)|^2\,\mathrm{Tr}\!\left[\hat{\rho}\hat{A}(p)\hat{B}(p)\hat{C}(p)\right]=\pm1 ,
\label{eq:jointaccount}
\end{equation}
because at fixed $p$ a single unitary conjugates the whole context, so $\hat{A}(p)\hat{B}(p)\hat{C}(p)=\hat{D}(p)\hat{A}\hat{B}\hat{C}\hat{D}(p)^\dagger=\pm\idop$ and the operator identities are intact. That account returns $\chi=6$ in every frame, as \cref{prop:invariance} requires, which agrees with the recorded frequencies. \Cref{eq:factorise} provides a different account: it takes the packet average first, at every step, and applies the L\"uders instrument of the resulting unsharp effect $\tilde{\Pi}_\pm$. The two agree on single-shot probabilities but disagree on sequences, because the second discards the spin-momentum correlation between one measurement and the next. $\chi(\Delta)$ belongs to that spin-only account, being the frame dependence of a description, not of an outcome frequency, as are \cref{eq:pfail,eq:aimed} below. \Cref{sec:context} interprets the discarded correlation.

\begin{figure}[tb]
\centering
\includegraphics[width=\columnwidth]{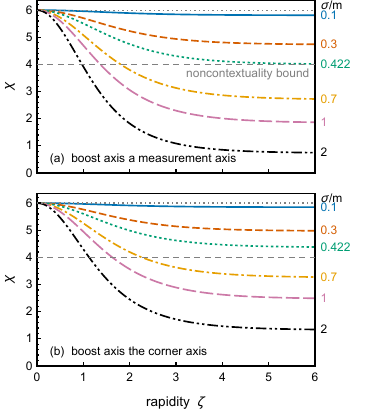}
\caption{Sequential Peres-Mermin value $\chi$ attributed to the lab devices by an observer boosted with rapidity $\zeta$ who considers only the spin sector, for Gaussian wave packets of width $\sigma/m$: (a) the square built so that the boost axis is one of its two measurement axes, \cref{eq:chiDelta}; (b) the square built so that the boost axis appears only in the corner entry and both measurement axes are transverse (\cref{sec:pm}). The dashed line is the noncontextuality bound $\chi=4$; the value at $\zeta=0$ is $6$ for every width, every state, and both orientations.}
\label{fig:chi}
\end{figure}

\begin{figure}[tb]
\centering
\includegraphics[width=\columnwidth]{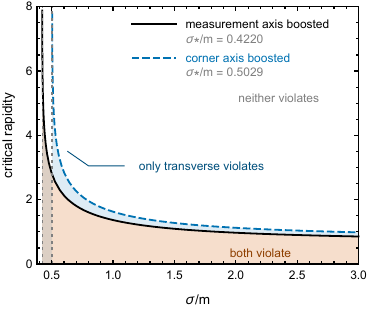}
\caption{Critical rapidity above which the boosted description of the rest-frame Peres-Mermin test satisfies the noncontextuality bound, against wave-packet width, for the two aligned orientations of the square. The curves diverge at $\sigma^*\approx 0.42\,m$ (boost axis as measurement axis, solid) and $\sigma^*_\perp\approx 0.50\,m$ (boost axis as corner axis, dashed). Below both curves the two orientations violate the bound; between them only the transverse square does; above, neither.}
\label{fig:critical}
\end{figure}

The map in \cref{eq:noisyeffects} runs from sharp lab observables to their spin-sector description in the boosted frame. The opposite direction, asking which spin-sector objects in the lab frame reproduce the boosted observer's own sharp measurements, inverts the channel and generally produces operators that are not positive, Ref.~\cite{camposdelgado2026frame}'s pseudo-POVMs. Choice of frame therefore affects which sets of spin measurements certify state-independent contextuality, not whether contextuality exists.

\emph{Orientation of the square---}
The square \cref{eq:square} is built from the orthonormal triad $\{\hat{x},\hat{y},\hat{z}\}$: two directions giving the single-qubit measurement axes, and the third appears only in the corner entry $\hat{A}_{33}$. In the rest frame, every orthonormal triad $\{\hat{a},\hat{b},\hat{c}\}$ possesses an equivalent square. Under a boost the choices diverge. Writing $M=\mathrm{diag}(\eta_\perp,\eta_\perp,\eta_z)$ for the Bloch-space action of the channel, the boosted image of $(\hat{a}\cdot\boldsymbol{\sigma})\otimes(\hat{b}\cdot\boldsymbol{\sigma})$ is $(M\hat{a}\cdot\boldsymbol{\sigma})\otimes(M\hat{b}\cdot\boldsymbol{\sigma})$, and for any vectors $\vec{u},\vec{v}$,
\begin{align}
\big[(\vec{u}\cdot\boldsymbol{\sigma})\otimes(\vec{v}\cdot\boldsymbol{\sigma})&,\,(\vec{v}\cdot\boldsymbol{\sigma})\otimes(\vec{u}\cdot\boldsymbol{\sigma})\big]\nonumber\\
&=2i\,(\vec{u}\cdot\vec{v})\left(X\otimes\idop-\idop\otimes X\right),
\label{eq:commutator}
\end{align}
with $X=(\vec{u}\times\vec{v})\cdot\boldsymbol{\sigma}$, and similar expressions for the other pairs in the mixed triples. The images of every row and column remain mutually commuting iff $M\hat{a}$, $M\hat{b}$, $M\hat{c}$ are pairwise orthogonal, i.e., iff the triad consists of eigenvectors of $M^2$, which for $\eta_z^2\neq\eta_\perp^2$, and so for every $\Delta\in(0,\tfrac12]$, requires it to contain the boost axis. (\Cref{eq:commutator} vanishes for $\vec{u}\parallel\vec{v}$ as well, but that case is excluded by the triad being orthonormal.) A misaligned square not only no longer violates the noncontextuality inequality, but its triples no longer commute, so the sequential L\"uders protocol becomes order-dependent and the outcomes of one run stop defining a single context. (Failing to commute does not by itself make unsharp observables incompatible; heavily damped effects can be jointly measurable without commuting. We therefore lose the repeatable sequential reading the KS argument requires.) The boost therefore selects which context structures remain available, as well as how well they perform (\cref{sec:context}).

This process leaves two aligned orientations. The boost axis can serve as a measurement axis, giving \cref{fig:square}'s square and the value \cref{eq:chiDelta}, or as the corner axis, with both measurement axes transverse, giving
\begin{equation}
\chi_\perp(\Delta)=2(1-\Delta)^4\left[2+(1-2\Delta)^2\right].
\label{eq:chiperp}
\end{equation}
Since $\chi_\perp(\Delta)\geq\chi(\Delta)$ throughout $[0,\tfrac{1}{2}]$, with equality only at $\Delta=0$, a moving observer should lay the square down transverse to their motion. The threshold rises to $\Delta_c^\perp\approx 0.074382$, the smallest root of $4\Delta^6-20\Delta^5+43\Delta^4-52\Delta^3+38\Delta^2-16\Delta+1$, the critical width to $\sigma^*_\perp\approx 0.50\,m$, and the critical rapidities to $\zeta_c^\perp\approx 2.26$ at $\sigma=0.7\,m$ and $1.63$ at $\sigma=m$. At $\sigma=0.5\,m$, where the aligned square loses its violation near $\zeta\approx 2.8$, the transverse square retains it at every rapidity, since $\Delta_\infty(0.5)\approx 0.0738$ sits below $\Delta_c^\perp$, though by under a percent and so within the convention dependence noted in \cref{sec:channel}.

\emph{A GHZ computation---}
The same account can be applied to a computation. In measurement-based quantum computation (MBQC) \cite{Raussendorf:2001zim,Raussendorf:2012tof}, local measurements on an entangled resource state drive the computation, and the simplest example is Anders and Browne's gate on $\lvert\mathrm{GHZ}\rangle=(\lvert 000\rangle+\lvert 111\rangle)/\sqrt2$ \cite{Anders:2009zzs}, read as an OR gate as in Ref.~\cite{Raussendorf:2013wih}. Input bit $i_k$ selects the observable on qubit $k=1,2$, $\hat{X}_k$ for $i_k=0$ and $\hat{Y}_k$ for $i_k=1$, with $\hat{X}_k$, $\hat{Y}_k$ the Pauli operators $\hat{\sigma}_x$, $\hat{\sigma}_y$ on that qubit; qubit $3$ is measured in $\hat{X}_3$ if $i_1\oplus i_2=0$ and in $\hat{Y}_3$ otherwise; and the output is the parity $o=s_1\oplus s_2\oplus s_3$ of the outcomes $s_k\in\{0,1\}$, with $(-1)^{s_k}$ the recorded eigenvalue. Since $\hat{X}\hat{X}\hat{X}$ has eigenvalue $+1$ on $\lvert\mathrm{GHZ}\rangle$ while $\hat{X}\hat{Y}\hat{Y}$, $\hat{Y}\hat{X}\hat{Y}$ and $\hat{Y}\hat{Y}\hat{X}$ have eigenvalue $-1$, the parity equals $i_1\lor i_2$ with certainty for every input, and that certainty is Mermin's proof of the KS theorem \cite{mermin1990simple} read as a computation \cite{Raussendorf:2013wih}. Now let the three qubits be carried by three particles with wave packets as in \cref{sec:channel}, and describe the same devices from a frame boosted along $\hat{z}$, transverse to both measured axes. By \cref{prop:invariance} the lab devices still succeed with probability $1$. In the spin-only account each device is the unsharp effect $\tfrac12(\idop\pm\eta_\perp\hat{O}_k)$ of \cref{eq:noisyeffects}; expanding the product of the three effects, the one- and two-qubit correlators of $\lvert\mathrm{GHZ}\rangle$ vanish and the three-qubit correlator is scaled by $\eta_\perp^3$, so the account attributes to every input the success probability
\begin{equation}
p=\tfrac12\left(1+\eta_\perp^3\right)=\tfrac12\left[1+(1-\Delta)^3\right].
\label{eq:ghzprob}
\end{equation}
Unlike the square, this certificate is state dependent, so \cref{eq:ghzprob} is also what a detector comoving with the boosted observer records when it measures $\hat{X}$ or $\hat{Y}$ sharply on the boosted spin state, as for CHSH in \cref{sec:bell}; and since each particle is measured once, no sequential instrument enters and the value, unlike $\chi(\Delta)$, does not depend on the instrument chosen for the unsharp effects. Raussendorf \cite{Raussendorf:2013wih} showed that an MBQC evaluating a nonlinear Boolean function of $m$ bits with success probability $p_S>1-2^{-m}$ is contextual; his proof bounds the success of any setting-independent value assignment, so it applies to these statistics although his setting is projective. For the OR gate $m=2$, so the computation is certified contextual only for
\begin{equation}
\Delta<1-2^{-1/3}\approx0.2063,
\label{eq:ghzthreshold}
\end{equation}
which holds at every rapidity for $\sigma<\sigma^*_{\rm GHZ}\approx1.381\,m$; above the threshold the bound is silent. For a boost along $\hat{x}$ the four inputs acquire different factors, $\eta_z^3$ for $\hat{X}\hat{X}\hat{X}$ and $\eta_z\eta_\perp^2$ for the other three, and the input-averaged success probability is lower, so the transverse orientation is again the robust one. The threshold sits between \cref{tab:hierarchy}'s two CHSH rows.

\section{Arbitrary spin}\label{sec:arbitrary}

\emph{The channel for spin $s$---}
This analysis generalises to any spin $s$, because the boost acts through the spin-$s$ representation of the same momentum-dependent rotation. The reduced spin channel is \cref{eq:channel} with $\hat{D}^{(s)}$ in place of $\hat{D}^{(1/2)}$. Decompose the operators into irreducible tensor operators $T^k_q$, $k=0,\dots,2s$. Writing $\hat{n}(\varphi)=\hat{R}_z(\varphi')\hat{x}$ with $\varphi'=\varphi-\pi/2$, the rotation is $\hat{R}_z(\varphi')\hat{R}_x(\theta_W)\hat{R}_z(\varphi')^{-1}$. Since $T^k_q$ transforms in the spin-$k$ representation, conjugating by it multiplies the $(q',q)$ element of $\hat{D}^{(k)}$ by $e^{i(q-q')\varphi'}$, so the azimuthal average removes every $q'\neq q$ and leaves $\hat{D}^{(k)}[\hat{R}_x(\theta_W)]_{qq}=d^{\,k}_{qq}(\theta_W)$. The channel is therefore diagonal in the tensor basis,
\begin{equation}
\Lambda^{(s)}_\zeta\left(T^k_q\right)=f^k_q\,T^k_q,\qquad
f^k_q=\left\langle d^{\,k}_{qq}(\theta_W)\right\rangle ,
\label{eq:master}
\end{equation}
the packet average of the diagonal Wigner $d$-functions. For $k=1$ it reproduces \cref{eq:etas}, since $d^{\,1}_{11}=1-\sin^2(\theta_W/2)$ and $d^{\,1}_{00}=1-2\sin^2(\theta_W/2)$; in general $d^{\,k}_{qq}$ is a polynomial of degree $k$ in $\sin^2(\theta_W/2)$, so the spin-$s$ channel is fixed by the moments $\Delta_j=\langle\sin^{2j}(\theta_W/2)\rangle$ with $j\leq 2s$, of which $\Delta_1=\Delta$.

Much of the qubit analysis holds for higher spins. The channel is unital and trace preserving, so the resolutions of identity defining each context are preserved while sharpness is not: for $\Delta>0$ every $f^k_q$ with $k\geq 1$ is strictly below one, and every nontrivial projector maps to an effect with spectrum strictly inside $(0,1)$. The latter requires the channel rather than just the $f^k_q$: $\bra{v}\Lambda^{(s)}_\zeta(P)\ket{v}$ equals $1$ or $0$ only if $\hat{D}(p)^\dagger\ket{v}$ lies in, or orthogonal to, the range of $P$ for almost every $p$, and a packet with continuous support generates too many distinct $\hat{D}(p)$ for either. (A degenerate distribution can fail this: a $\theta_W$ concentrated at $\pi/2$ gives $f^2_0=-1/2$ and sends the $J_z=0$ projector to an effect with a zero eigenvalue.) The KS valuation argument, which assigns $0$ or $1$ to idempotents, therefore does not hold in the boosted frame for any spin. Commutation is again selective: contexts whose projectors are diagonal in the $J_z$ eigenbasis are spanned by the $T^k_0$ and stay mutually commuting under \cref{eq:master}, while generic contexts do not, generalising \cref{sec:pm}.

Contextuality loss also grows quickly with tensor rank. At small Wigner angle,
\begin{equation}
f^k_q=1-\left[k(k+1)-q^2\right]\Delta_1+\mathcal{O}(2),
\label{eq:smallangle}
\end{equation}
writing $\mathcal{O}(2)$ for the terms of joint degree two and above in the moments $\Delta_j$. The $q=0$ component of rank $k$ therefore decays $k(k+1)/2$ times faster than the qubit's $\hat{\sigma}_z$, while a nonzero $q$ slows the decay back by $q^2\Delta_1$. The damping is also not monotone in the Wigner angle: $d^{\,3}_{00}=-0.03$ at $\theta_W=0.7$~rad while $d^{\,1}_{11}=0.88$, so for $s\geq3/2$ a packet concentrated near that angle would reverse the sign of a rank-$3$ component rather than fade it.

\emph{Sharpness loss ray by ray---}
The KS argument works with rays, so the useful measure of the boost's effect is how far each rank-one projector $P_\psi=\proj{\psi}$ is from its boosted image. Its survival $F(\psi)=\mathrm{Tr}[P_\psi\Lambda^{(s)}_\zeta(P_\psi)]$, the weight the boosted effect keeps on the original ray, has a closed form at every order.

\begin{proposition}\label{prop:planar}
For a pure spin-$s$ state $\ket{\psi}$ let $V_\perp(\psi)=\mathrm{Var}_\psi(\hat{J}_x)+\mathrm{Var}_\psi(\hat{J}_y)$ be its planar spin variance transverse to the boost. Then
\begin{gather}
F(\psi)=\left\langle\left|\bra{\psi}\hat{D}^{(s)}\ket{\psi}\right|^2\right\rangle=1-2\Delta_1V_\perp(\psi)+\mathcal{O}(2),
\label{eq:survival}\\
\bra{\chi}\Lambda^{(s)}_\zeta(P_\psi)\ket{\chi}=2\Delta_1\sum_{a=x,y}|\bra{\chi}\hat{J}_a\ket{\psi}|^2+\mathcal{O}(2)
\label{eq:leak}
\end{gather}
for any ray $\ket{\chi}\perp\ket{\psi}$, and, averaged over Haar-random $\ket{\psi}$ in $d=2s+1$ dimensions,
\begin{equation}
\bar{F}=\frac{d+\left\langle\chi_s(\theta_W)^2\right\rangle}{d(d+1)}=1-\frac{d(d-1)}{3}\Delta_1+\mathcal{O}(2),
\label{eq:haar}
\end{equation}
with $\chi_s(\theta)=\sin[(2s+1)\theta/2]/\sin(\theta/2)$ the spin-$s$ character.
\end{proposition}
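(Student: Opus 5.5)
The approach is to exploit the fact that, after the azimuthal average, the channel is diagonal in the tensor basis, \cref{eq:master}. Every quantity in the proposition then becomes an exact finite combination of the $f^k_q$. Each $f^k_q$ is a polynomial of degree $k$ in $\sin^2(\theta_W/2)$, so its packet average is exactly linear in the moments $\Delta_j$. This makes the meaning of $\mathcal{O}(2)$ precise, and the only first-order input is \cref{eq:smallangle}. I assume, as the paper does for its Gaussian packets, that the packet is azimuthally symmetric, so that averaging over $\varphi$ inside the packet average is legitimate.

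For \cref{eq:survival} I expand $P_\psi=\sum_{k,q}c_{kq}T^k_q$ in an orthonormal tensor basis. Then $F=\mathrm{Tr}[P_\psi\Lambda^{(s)}_\zeta(P_\psi)]=\sum_{k,q}|c_{kq}|^2f^k_q$ exactly, using $\sum|c_{kq}|^2=\mathrm{Tr}P_\psi^2=1$. Inserting \cref{eq:smallangle} gives
\[
F=1-\Delta_1\sum_{k,q}|c_{kq}|^2\left[k(k+1)-q^2\right]+\mathcal{O}(2).
\]
The coefficient $k(k+1)-q^2$ is the eigenvalue of the superoperator $\mathcal{C}_\perp(X)=\sum_{a=x,y}[\hat{J}_a,[\hat{J}_a,X]]$ on $T^k_q$: the full double-commutator Casimir gives $k(k+1)$, and the $z$ part gives $q^2$. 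The sum is therefore $\mathrm{Tr}[P_\psi\,\mathcal{C}_\perp(P_\psi)]$. The identity $\mathrm{Tr}(P[J,[J,P]])=2\langle J^2\rangle_\psi-2\langle J\rangle_\psi^2$ holds for a rank-one projector, which turns this into $2V_\perp(\psi)$. The first equality in \cref{eq:survival} is just the definition of $\Lambda^{(s)}_\zeta$ from \cref{eq:channel} with $\hat{D}^{(s)}$, using $\mathrm{Tr}[P_\psi\hat{D}P_\psi\hat{D}^\dagger]=|\langle\psi|\hat{D}|\psi\rangle|^2$.

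For \cref{eq:leak} I follow the same route with $\langle\chi|\mathcal{C}_\perp(P_\psi)|\chi\rangle$. When $\chi\perp\psi$, only the cross terms $-2\hat{J}_aP_\psi\hat{J}_a$ survive, which gives $2\sum_a|\langle\chi|\hat{J}_a|\psi\rangle|^2$. The sign works out because the $\Delta_1$ correction enters as $-\Delta_1\,\mathcal{C}_\perp$. A direct check is to expand $\hat{D}=e^{-i\theta_W\hat{n}\cdot\hat{J}}$ to first order and average $|\langle\chi|\hat{n}\cdot\hat{J}|\psi\rangle|^2$ over $\varphi$. The $\sin\varphi\cos\varphi$ cross term vanishes, and $\theta_W^2=4\sin^2(\theta_W/2)+\dots$ supplies the factor $4\cdot\tfrac12=2$.

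For \cref{eq:haar} I use the standard second-moment Haar identity $\int d\psi\,|\langle\psi|U|\psi\rangle|^2=(d+|\mathrm{Tr}U|^2)/[d(d+1)]$, together with $\mathrm{Tr}\hat{D}^{(s)}(\theta_W)=\chi_s(\theta_W)$, which is real; then I average over the packet. For the expansion I write $\chi_s=\sum_{m=-s}^s\cos(m\theta_W)$ as a polynomial in $\sin^2(\theta_W/2)$ via $\cos m\theta=1-2m^2\sin^2(\theta/2)+\dots$. This gives $\chi_s^2=d^2-4d\,\Sigma\,\sin^2(\theta_W/2)+\dots$ with $\Sigma=\sum m^2=d(d^2-1)/12$. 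Substituting yields $1-d(d-1)\Delta_1/3$. A consistency check is that the Haar average of $V_\perp$ is $\tfrac23\cdot\tfrac{d-1}{2}\cdot\tfrac{(d+1)}{2}\cdot\tfrac{d}{d+1}$, which should reproduce the same coefficient via the first part.

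The main obstacle I expect is bookkeeping rather than ideas. The first is to verify that $|\chi_s|^2$ and the $f^k_q$ are genuinely polynomials in $\sin^2(\theta_W/2)$, so that no odd powers or $\theta_W^4$ versus $\sin^4$ mismatches leak into "first order". The second is to confirm the eigenvalue of $\mathcal{C}_\perp$ on $T^k_q$ with the paper's normalisation. I would settle both by checking $k=1$ against \cref{eq:etas}: $V_\perp$ of a spin-$1/2$ $\hat{\sigma}_z$ eigenstate is $1/2$, giving $F=1-\Delta$, which matches $(1+\eta_z)/2$.
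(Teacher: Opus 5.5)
Your proposal is correct. For \cref{eq:survival,eq:leak} it takes a different route from the paper.

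The paper works one unitary at a time. It expands $\hat{D}^{(s)}=e^{-i\theta_W\hat{n}\cdot\hat{\mathbf{J}}}$ to second order in $\theta_W$ at fixed momentum, which gives $1-\theta_W^2\,\mathrm{Var}_\psi(\hat{n}\cdot\hat{\mathbf{J}})$ and $\theta_W^2|\bra{\chi}\hat{n}\cdot\hat{\mathbf{J}}\ket{\psi}|^2$. It then averages over the azimuth and uses $\langle\theta_W^2\rangle=4\Delta_1+\mathcal{O}(2)$. This is your \emph{direct check} for \cref{eq:leak}, applied to both formulas.

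You instead work with the already-averaged channel. \Cref{eq:master,eq:smallangle}, together with the fact that $k(k+1)-q^2$ is the eigenvalue of $\mathcal{C}_\perp$ on $T^k_q$, give the superoperator statement $\Lambda^{(s)}_\zeta=\mathrm{id}-\Delta_1\mathcal{C}_\perp+\mathcal{O}(2)$. \Cref{eq:survival,eq:leak} are then its diagonal and off-diagonal matrix elements on $P_\psi$.

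Your route buys a unified picture and an exact remainder. Because each $d^{\,k}_{qq}$ is a polynomial in $\sin^2(\theta_W/2)$, the $\mathcal{O}(2)$ term is a fixed linear combination of $\Delta_2,\dots,\Delta_{2s}$, so there is no need to translate $\mathcal{O}(\theta_W^4)$ into moments. The price is that it rests on the tensor diagonalisation, whereas the paper's expansion is self-contained. For \cref{eq:haar} you follow the paper: Nielsen's formula plus $\mathrm{Tr}\,\hat{D}^{(s)}=\chi_s$.

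Two of the bookkeeping points you flag resolve cleanly.
\begin{itemize}
\item For half-integer $s$ the character itself is not a polynomial in $\sin^2(\theta/2)$; for example $\chi_{1/2}=2\cos(\theta/2)$. However, $\chi_s^2=\sum_{J=0}^{2s}\chi_J$ is a polynomial, since every $J$ is an integer.
\item The eigenvalue of $\mathcal{C}_\perp$ does not depend on how the $T^k_q$ are normalised, so that check is unnecessary.
\end{itemize}

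Your framework also yields the exact Haar formula without Nielsen's identity. Use $\mathbb{E}_\psi|c_{kq}|^2=1/[d(d+1)]$ for $k\geq1$, $|c_{00}|^2=1/d$, and $\sum_q d^{\,k}_{qq}=\chi_k$. Then $\bar{F}=\tfrac1d+(\langle\chi_s^2\rangle-1)/[d(d+1)]$, which is \cref{eq:haar}.
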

\begin{proof}
Write $\hat{D}^{(s)}=e^{-i\theta_W\hat{n}\cdot\hat{\mathbf{J}}}$ and expand to second order in $\theta_W$:
\begin{align*}
|\bra{\psi}\hat{D}^{(s)}\ket{\psi}|^2&=1-\theta_W^2\,\mathrm{Var}_\psi(\hat{n}\cdot\hat{\mathbf{J}})+\mathcal{O}(\theta_W^4),\\
\bra{\chi}\hat{D}^{(s)}P_\psi\hat{D}^{(s)\dagger}\ket{\chi}&=\theta_W^2|\bra{\chi}\hat{n}\cdot\hat{\mathbf{J}}\ket{\psi}|^2+\mathcal{O}(\theta_W^4).
\end{align*}
The azimuthal average over $\hat{n}(\varphi)$ in the plane transverse to the boost replaces $\hat{n}\cdot\hat{\mathbf{J}}$ by $\hat{J}_x$ and $\hat{J}_y$ with weight $\tfrac12$ each, and the packet average gives $\langle\theta_W^2\rangle=4\Delta_1+\mathcal{O}(2)$. For \cref{eq:haar}, the channel is a mixture of unitaries $\hat{D}^{(s)}$, so the Haar-averaged survival is the average gate fidelity of that mixture with the identity, $\langle(d+|\mathrm{Tr}\,\hat{D}^{(s)}|^2)\rangle/[d(d+1)]$ \cite{nielsen2002simple}, and $\mathrm{Tr}\,\hat{D}^{(s)}=\chi_s(\theta_W)$; expanding $\chi_s(\theta)^2=d^2-d^2(d^2-1)\theta^2/12+\mathcal{O}(\theta^4)$ gives the first-order form.
\end{proof}

\Cref{eq:survival} says the boost degrades a spin state at a rate set by its spin uncertainty in the plane transverse to the motion, and \cref{eq:leak} that it leaks probability between two orthogonal rays in proportion to the transverse spin matrix element connecting them. For $s=\tfrac12$ both reduce to \cref{eq:etas}: $V_\perp=\tfrac12$ for a $\hat{\sigma}_z$ eigenstate and $\tfrac14$ for a $\hat{\sigma}_x$ eigenstate give $1-F=\Delta$ and $\tfrac12\Delta$, the defects of the effects $\tfrac12(\idop\pm\eta_z\hat{\sigma}_z)$ and $\tfrac12(\idop\pm\eta_\perp\hat{\sigma}_x)$ on their own eigenstates. For the $\hat{J}_z$ eigenstates of spin $s$, $V_\perp(m)=s(s+1)-m^2$, which recovers the spin-$1$ spectra quoted below \cref{eq:ffactors} ($1-F=4\Delta_1$ for $m=0$ and $2\Delta_1$ for $m=\pm1$). Across all states of spin $s$,
\begin{equation}
V_\perp^{\min}(s)\;\leq\;V_\perp(\psi)\;\leq\;s(s+1)-m_0^2 ,
\label{eq:vwindow}
\end{equation}
with $m_0=0$ for integer and $\tfrac12$ for half-integer spin, the upper bound attained by $\ket{s,m_0}$. The lower bound is the planar-squeezing limit of He \emph{et al.}~\cite{he2011planar}: it is $\tfrac{7}{16}$ at $s=1$ (below the spin-coherent value $\tfrac12$) and grows only as $s^{2/3}$, a numerical minimisation gives $V_\perp^{\min}/s^{2/3}$ rising slowly from 0.44 at $s=1$ to 0.54v at $s=16$. The Haar mean $V_\perp=d(d-1)/6$ of \cref{eq:haar} sits far above that.

This replaces the rank-counting heuristic with a statement about thresholds. Every effect of a boosted KS set is the image of a rank-one projector, with sharpness defect $1-F=2\Delta_1V_\perp$ to first order, so \cref{eq:vwindow} bounds how fast any state-independent certificate in dimension $d$ can lose the sharpness the KS argument needs: no faster than $2[s(s+1)-m_0^2]\Delta_1$ per ray, and no slower than $2V_\perp^{\min}(s)\Delta_1$. A ray set drawn at random degrades like $s^2$, and its threshold like $s^{-2}$; a ray set built from planar-squeezed states would degrade only like $s^{2/3}$. Which of these an optimal KS set approaches is the open question: the rays of a KS set are constrained by their orthogonality graph, and the smallest average planar variance compatible with a given graph is a geometric quantity we have not computed. The two cases in hand sit where \cref{eq:vwindow} puts them. The Yu-Oh rays in their standard orientation have $V_\perp$ between $\tfrac49$ and $2$, with mean $1.04$, and the set's thresholds fall $3.6$ times below the qubit pair's; the reorientation by $R$ discussed below lowers the mean over the thirteen rays to $0.81$, in line with the gain in its thresholds. At large spin every entry of \cref{eq:vwindow} grows without bound, so an arbitrarily small momentum spread still suffices at high rapidity, leaving \cref{sec:pm}'s qubit pair as the most frame-robust massive carrier. The first-order coefficients of the sequential pair terms in a certificate such as \cref{eq:yuohL} are built from the same two ingredients, \cref{eq:survival,eq:leak}, together with the square root of the leak operator that the L\"uders instrument of an unsharp effect introduces; we do not need their explicit form here.

\emph{The Yu-Oh set---}
The scaling shows up already at $s=1$, the dimension where the KS theorem was proven. Only $\Delta_1$ and $\Delta_2$ enter into our calculations:
\begin{align}
f^1_0&=1-2\Delta_1, & f^1_{\pm1}&=1-\Delta_1,\nonumber\\
f^2_0&=1-6\Delta_1+6\Delta_2, & f^2_{\pm1}&=1-5\Delta_1+4\Delta_2,\nonumber\\
f^2_{\pm2}&=1-2\Delta_1+\Delta_2. &&
\label{eq:ffactors}
\end{align}
The Yu-Oh construction \cite{yu2012state} uses $13$ rays in $d=3$, dichotomic observables $\hat{A}_i=\idop-2P_i$, and the $24$ orthogonality relations of its exclusivity graph. The combination
\begin{equation}
L=\sum_i\av{\hat{A}_i}-\tfrac{1}{4}\sum_{i\sim j}\av{\hat{A}_i\hat{A}_j},
\label{eq:yuohL}
\end{equation}
with the second sum over the $48$ ordered pairs of orthogonal rays, equals $25/3$ for every qutrit state, against a noncontextual bound of $8$. Thirteen rays is the minimum for any state-independent proof, in any dimension \cite{cabello2016thirteen}.

The boost removes both the set's compatibility structure and its state independence. We take the rays in their usual Cartesian components, read as amplitudes in the $J_z$ eigenbasis, with the boost along $z$. At $\Delta_1=0$ all $24$ edges are commuting pairs; for any $\Delta_1>0$ only $6$ remain: the three pairs within the $J_z$ eigenbasis, and the three within $\{\ket{0},(\ket{{+}1}\pm\ket{{-}1})/\sqrt{2}\}$ (which is \cref{sec:arbitrary}'s Free Will triad). The remaining rays are spread over too many directions for the $J_z$-adapted exception to protect them. For the $18$ edges that no longer commute, the sequential correlator depends on measurement order; the ordered-pair sum in \cref{eq:yuohL} averages the two orders, which differ by less than $10^{-3}$ of $L$ at the thresholds below.

Implementing the boosted effects with square-root instruments, $L$ remains linear in the state, $L(\hat{\rho})=\mathrm{Tr}[\hat{\rho}\,W(\Delta_1,\Delta_2)]$, so its range over states is the interval spanned by the spectrum of $W$. At $\Delta_1=0$ that spectrum is the single value $25/3$; for $\Delta_1>0$ it spreads, and state independence is the first thing the boost removes. This leads to three different violation thresholds,
\begin{equation}
\Delta^{\rm YO}_{\rm SI}\approx0.00818,\;\;
\Delta^{\rm YO}_{\rm mix}\approx0.01044,\;\;
\Delta^{\rm YO}_{\rm best}\approx0.01632,
\label{eq:yuohthreshold}
\end{equation}
below the first of which every qutrit state still violates the bound, above the last of which none does, and with the maximally mixed state crossing in between. Even the largest of the three is $3.59$ times below the qubit-pair $\Delta_c$, in line with the planar variances of its rays (\cref{prop:planar}). All three are quoted along $\Delta_2=\Delta_1^2$; the widths and rapidities use the true packet moments, for which $\Delta_2\approx2\Delta_1^2$. The protected window shrinks to $\sigma^*_{\rm SI}\approx0.132\,m$ for state-independent violation and $\sigma^*_{\rm mix}\approx0.151\,m$ for the maximally mixed state, with critical rapidities $\zeta_c\approx1.22$, $0.62$, $0.51$, $0.38$ at $\sigma/m=0.3$, $0.7$, $1$, $2$ for that state (\cref{fig:yuoh}).

Orientation also has an effect here: rotations about the boost axis commute with the channel and change nothing, but the remaining two Euler angles do. The ray set conjugated by $R=e^{-i(\pi/2)J_y}e^{-i(\pi/2)J_z}$ raises the three thresholds to $0.01041$, $0.01316$ and $0.01639$, and the corresponding widths to $0.151\,m$, $0.171\,m$ and $0.193\,m$. The first two gain about a quarter, the same relative gain the transverse square wins over the aligned one; the third barely moves. Reorientation therefore extends the range of $\Delta_1$ over which the test stays state independent rather than the range over which some state violates, a split the square, being state independent at every $\Delta$, does not have. The square has one advantage here: at $\sigma=0.5\,m$ the transverse square violates at every rapidity while the aligned one does not, whereas above $\sigma=0.151\,m$ neither orientation of the ray set keeps a state-independent violation at any rapidity.

\begin{figure}[tb]
\centering
\includegraphics[width=\columnwidth]{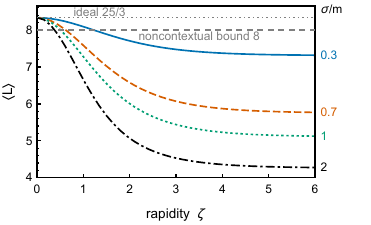}
\caption{Sequential Yu-Oh value for a boosted observer against rapidity, for several wave-packet widths, evaluated on the maximally mixed state with the ray set in its standard orientation. The ideal value is $25/3$ (dotted); the noncontextual bound is $8$ (dashed). Under the boost the value becomes state dependent, spreading by about $\pm0.1$ about these curves near the threshold. Reorienting the ray set raises this threshold by about a quarter (\cref{sec:arbitrary}).}
\label{fig:yuoh}
\end{figure}

\emph{The Free Will Theorem for wave packets---}
The Free Will Theorem \cite{conway2006free} rests on the SPIN axiom: measurements of the squared spin components $(S_x^2,S_y^2,S_z^2)$ of a spin-$1$ particle along any orthogonal triad always yield the outcome pattern $101$ in some order. This is the $d=3$ KS structure, and the spin-$1$ channel \cref{eq:ffactors} determines what remains of it for a boosted observer.

The triad aligned with the boost is an exception to the generic loss of commutation: $S_z^2$ and $S_x^2-S_y^2$ involve only $T^2_0$ and $T^2_{\pm2}$, and $T^2_0$ acts as the identity on the $J_z=\pm1$ subspace, so the boosted images of $S_x^2$, $S_y^2$, $S_z^2$ commute in $(\Delta_1,\Delta_2)$ and the triple experiment stays jointly measurable. However, we lose certainty: the boosted image of $S_z^2$ has eigenvalues $4(\Delta_1-\Delta_2)$ on the $J_z=0$ state and $1-2\Delta_1+2\Delta_2$ on $J_z=\pm1$; the transverse squares are not degenerate in the same way, the boosted $S_x^2$ having spectrum $\{2\Delta_1-\tfrac{3}{2}\Delta_2,\;1-\tfrac{1}{2}\Delta_2,\;1-2\Delta_1+2\Delta_2\}$. Since $\Delta_1^2\leq\Delta_2<\Delta_1$ whenever $\Delta_1>0$, all of these lie strictly inside $(0,1)$, and the $101$ pattern is obtained only with high probability. Expanding branch by branch, the probability of a non-$101$ outcome triple for the maximally mixed state is
\begin{equation}
p_{\mathrm{fail}}=\tfrac{16}{3}\,\Delta_1-\tfrac{14}{3}\,\Delta_2+\mathcal{O}(2),
\label{eq:pfail}
\end{equation}
Since $\Delta_2=\mathcal{O}(\Delta_1^2)$ for a wave packet, the rate in $\Delta_1$ alone is $16/3$: about $0.5\%$ already at $\Delta_1=10^{-3}$ (\cref{fig:pfail}). The full expression is a polynomial of degree three in $(\Delta_1,\Delta_2)$; at $(\Delta_1,\Delta_2)=(0.05,0.004)$ it gives $0.21$ for the maximally mixed state, and optimising over pure states raises it to $0.30$. Rotated triads stop commuting as well. Since the theorem's argument runs through the certainty of the $101$ pattern, the SPIN axiom interpreted as a statement about spin degrees of freedom holds for devices at rest relative to the source and fails, for any wave packet of finite width, in every other frame's spin-only description. A relativistically careful formulation must attach the axiom to the full system of spin and momentum, where \cref{prop:invariance} protects it, or index it to a specified frame. Whether the theorem survives approximate axioms is an existing question \cite{hermens2014conway,cator2014constraints}; the boost makes the imprecision kinematic, fixed by the packet and the rapidity rather than by the apparatus.

\begin{figure}[tb]
\centering
\includegraphics[width=\columnwidth]{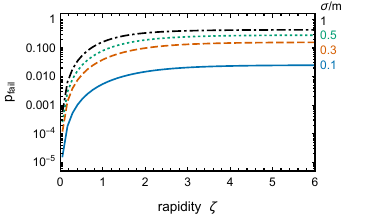}
\caption{Probability of a non-$101$ outcome triple in the spin-only account of the boosted Free Will Theorem experiment (maximally mixed state, triad aligned with the boost), against rapidity, for several wave-packet widths. Read as certainty about spin alone, the SPIN axiom fails at every $\Delta_1>0$; kept on spin and momentum together it does not (\cref{eq:jointaccount}).}

\label{fig:pfail}
\end{figure}

\section{Photons}\label{sec:photons}

\emph{Collinear boosts---}
For a massless particle, $\hat{U}(\Lambda)\ket{p,\lambda}=e^{-i\lambda\xi(\Lambda,p)}\ket{\Lambda p,\lambda}$ with $\lambda=\pm1$ the helicity and $\xi$ the little-group phase, computed from $W(\Lambda,p)=L(\Lambda p)^{-1}\Lambda L(p)$ in the standard massless convention. Gingrich \emph{et al.} record that this phase vanishes for a boost along the beam axis~\cite{gingrich2003entangled}; we give an explicit proof.

\begin{proposition}\label{prop:collinear}
Let $\Lambda$ be a boost along the reference axis $\hat{z}$ of the standard section $L(p)=\hat{R}_z(\varphi)\hat{R}_y(\theta)\hat{B}_z$. Then for every massless $p$ the little-group element $W(\Lambda,p)$ is a null translation, containing no rotation about $\hat{z}$, so $\xi(\Lambda,p)=0$.
\end{proposition}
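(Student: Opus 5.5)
The plan is to exploit the fact that a boost along $\hat{z}$ commutes with rotations about $\hat{z}$, which removes the azimuth $\varphi$. The whole little-group element then lies in the Lorentz subgroup that fixes the $\hat{y}$ direction, and no nontrivial rotation about $\hat{z}$ can belong to that subgroup.

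\emph{Step 1: eliminate the azimuth.} Write $\Lambda=\hat{B}_z(\zeta)$. A boost along $\hat{z}$ acts only on the $(t,z)$ components, so $\Lambda p$ has the same azimuth $\varphi$ as $p$, a new polar angle $\theta'$ and a new energy $|\mathbf{p}'|$. Hence
\begin{equation*}
L(\Lambda p)=\hat{R}_z(\varphi)\hat{R}_y(\theta')\hat{B}_z(\xi'),
\end{equation*}
where $\hat{B}_z(\xi')$ is the boost along $\hat{z}$ taking the reference null vector $k=(\kappa,0,0,\kappa)$ to $(|\mathbf{p}'|,0,0,|\mathbf{p}'|)$. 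Because $[\hat{B}_z(\zeta),\hat{R}_z(\varphi)]=0$, the factor $\hat{R}_z(\varphi)$ cancels between $L(\Lambda p)^{-1}$ and $\Lambda L(p)$. This leaves
\begin{equation*}
W=\hat{B}_z(\xi')^{-1}\hat{R}_y(\theta')^{-1}\hat{B}_z(\zeta)\hat{R}_y(\theta)\hat{B}_z(\xi).
\end{equation*}

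\emph{Step 2: $W$ fixes $\hat{y}$.} Every factor in this product is a boost along $\hat{z}$ or a rotation about $\hat{y}$. Each therefore acts only on the $(t,x,z)$ block and leaves the spacelike vector $e_y=(0,0,1,0)$ invariant, so $We_y=e_y$. By construction $Wk=k$, so $W$ lies in the little group of $k$, which is isomorphic to $ISO(2)$. I would write a general element as $W=T(a,b)\,\hat{R}_z(\alpha)$, with $T(a,b)$ a null translation. I would then use the standard action of such an element on a transverse vector $e_j$ ($j=x,y$):
\begin{equation*}
W e_j=\hat{R}_z(\alpha)e_j+c_j\,k
\end{equation*}
for some scalar $c_j$. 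In other words, modulo multiples of $k$, the transverse part of $We_j$ is rotated by $\alpha$, since null translations shift transverse vectors only along $k$.

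\emph{Step 3: conclude $\alpha=0$.} Comparing $We_y=e_y$ with $\hat{R}_z(\alpha)e_y+c_y k$, and using that $k$ has no transverse component, gives $\hat{R}_z(\alpha)e_y=e_y$ and $c_y=0$. This forces $\alpha\equiv0 \pmod{2\pi}$, so $W=T(a,b)$ is a pure null translation, with $b=0$ and only the $x$-translation possibly nonzero. The helicity phase $e^{-i\lambda\xi}$ comes only from the rotation part of the little-group element, so $\xi(\Lambda,p)=0$.

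I expect the main obstacle to be the lift to $SL(2,\mathbb{C})$. There a $2\pi$ rotation could in principle reappear as a sign $-1$, giving $\xi=2\pi$, which is harmless for integer helicity but should be excluded cleanly. I would handle this by continuity in $\zeta$: the map $\zeta\mapsto W(\hat{B}_z(\zeta),p)$ is continuous, it equals the identity at $\zeta=0$, and its rotation angle is confined to $2\pi\mathbb{Z}$ by Step 3, so the angle stays at zero. A second point to check is the convention for $\hat{B}_z(\xi)$ at $\theta=\pi$, where the section is singular. I would treat that single momentum direction separately: there $p$ is antiparallel to the boost, $\hat{R}_y(\pi)$ conjugates $\hat{B}_z(\zeta)$ to $\hat{B}_z(-\zeta)$, and $W$ reduces to a product of $\hat{z}$-boosts that is the identity on $k$.
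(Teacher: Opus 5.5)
Your proof is correct, but it takes a different route from the paper's proof. The paper also cancels $\hat{R}_z(\varphi)$ using $[\Lambda,\hat{R}_z(\varphi)]=0$. It then multiplies out $\hat{B}_z(E')^{-1}\hat{R}_y(-\theta')\Lambda\hat{R}_y(\theta)\hat{B}_z(E)$ in $SL(2,\mathbb{C})$ and obtains the explicit matrix \cref{eq:Wcollinear}, with diagonal exactly $(1,1)$ and off-diagonal entry $-w\sin\theta\sinh\zeta/E'$.

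Your argument is instead a group-level version of the remark the paper makes right after its proof: every factor lies in the $SO(2,1)$ acting on $(t,x,z)$, and $W$ is joined to the identity through smaller rapidities. The paper phrases that remark through an intersection of Lie algebras, so it needs the path from the identity for the whole conclusion. Your invariant-vector step, $We_y=e_y$ forcing $c_y=0$ and $\alpha\equiv0 \pmod{2\pi}$, settles the $SO^+(1,3)$ statement outright. Continuity is then needed only to fix the spinor sign, which you handle correctly.

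\begin{itemize}
\item The explicit computation gives the translation parameter itself, leaves no sign ambiguity to resolve, and is valid at every $\theta$ in one formula. It also transfers to the geodesic section in one line.
\item Your route avoids computing $\theta'$ and $E'$. It also explains structurally why the surviving translation is the one along $x$ ($b=0$), which is exactly why the off-diagonal entry in \cref{eq:Wcollinear} is real.
\end{itemize}

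Two small points:
\begin{itemize}
\item Your separate treatment of $\theta=\pi$ is unnecessary. The standard section is no more singular there than at $\theta=0$, and Step 1 goes through at both poles once the same azimuth label is used for $p$ and $\Lambda p$, since both stay on the axis.
\item You use $\xi$ both for the rapidity of $\hat{B}_z$ and for the helicity phase in the statement. One of them should be renamed.
\end{itemize}
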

\begin{proof}
Work in $SL(2,\mathbb{C})$, where the boost is $\Lambda=\mathrm{diag}(e^{\zeta/2},e^{-\zeta/2})$ and $\hat{B}_z(E)=\mathrm{diag}(\sqrt{E/w},\sqrt{w/E})$ carries the reference momentum $k=(w,0,0,w)$ to energy $E$. Since $\Lambda$ commutes with $\hat{R}_z(\varphi)$, the azimuthal factors cancel in $W=L(\Lambda p)^{-1}\Lambda L(p)$, leaving $W=\hat{B}_z(E')^{-1}\hat{R}_y(-\theta')\,\Lambda\,\hat{R}_y(\theta)\hat{B}_z(E)$ with $E'=E(\cosh\zeta+\cos\theta\sinh\zeta)$ the boosted energy and $\theta'$ the boosted polar angle. Multiplying out,
\begin{equation}
W(\Lambda,p)=\begin{pmatrix}1 & -\dfrac{w\sin\theta\sinh\zeta}{E'}\\[6pt] 0 & 1\end{pmatrix}
\label{eq:Wcollinear}
\end{equation}
for every $\theta$, $\varphi$ and $E$. A general element of the little group of $k$ is upper triangular with diagonal $(e^{-i\xi/2},e^{i\xi/2})$, the diagonal carrying the rotation parameter and the off-diagonal entry the null translation, so $\xi=0$. In the geodesic section $L(p)\hat{R}_z(-\varphi)$ of \cref{app:section}, $W$ is conjugated by $\hat{R}_z(\varphi)$, which multiplies the off-diagonal entry by a phase and leaves the diagonal unchanged.
\end{proof}
The same conclusion follows without computation: every factor of $W$ lies in the $SO(2,1)$ subgroup acting on $(t,x,z)$, whose Lie algebra $\mathrm{span}\{J_y,K_x,K_z\}$ meets the little-group algebra of $k$ only along the null generator built from $J_y$ and $K_x$, and $W$ is connected to the identity through boosts of smaller rapidity. The consequence contrasts with the massive case: a polarisation Peres-Mermin square, built on the helicity qubits of two photons, is unaffected by boosts along the beam, for arbitrary wave packets and at every rapidity. Under collinear boosts, then, massless helicity is protected by kinematics rather than by narrow packets. For other boost directions the effect depends on direction, rapidity and angular spread, as it does for polarisation entanglement~\cite{gingrich2003entangled}.

\emph{Transverse boosts---}
A boost transverse to the beam generates a phase, which depends on the section used to fix the polarisation basis. We use the geodesic section $\hat{R}_z(\varphi)\hat{R}_y(\theta)\hat{R}_z(-\varphi)\hat{B}_z$, the minimal rotation carrying $\hat{z}$ to $\hat{p}$: it is the basis in which the transverse field of a paraxial beam is defined, and so the one a lens and waveplate act in, the identification that also underlies Lindner and Terno's account of how focusing degrades a polarisation qubit~\cite{lindner2005focusing}. We do this as the standard massless section gives complete dephasing for every beam width, an artefact of a basis that winds around the beam axis; Appendix~\ref{app:section} shows the difference. For a boost along $x$ and a photon at polar angle $\theta$ with azimuth $\varphi$, the geodesic phase expands as
\begin{equation}
\xi=-\theta\sin\varphi\,t_\zeta+\tfrac{1}{2}\theta^2\sin\varphi\cos\varphi\,t_\zeta^2+\mathcal{O}(\theta^3),
\label{eq:xiparaxial}
\end{equation}
with $t_\zeta=\tanh(\zeta/2)$. Both terms shown, and $\xi$ itself, are odd under $\varphi\to-\varphi$, which is why $\langle\sin2\xi\rangle$ vanishes and the reduced channel is pure dephasing; the rapidity enters only through $\tanh(\zeta/2)$, as it does for the massive Wigner angle. The second-order term makes the residual in \cref{sec:aimed} scale with $\delta^4$. Averaging over a paraxial Gaussian beam of angular spread $\delta$ gives a pure dephasing channel for each helicity qubit, $M=\mathrm{diag}(\eta,\eta,1)$ in the helicity basis, so that the undamped direction is now the third rather than the first two, with
\begin{equation}
\eta=\langle\cos2\xi\rangle,\qquad \eta_\infty\approx1-2\delta^2
\label{eq:eta}
\end{equation}
for small $\delta$ at infinite rapidity (\cref{fig:photoneta}).

\begin{figure}[tb]
\centering
\includegraphics[width=\columnwidth]{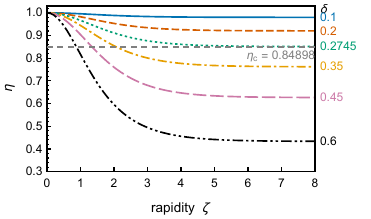}
\caption{Dephasing factor $\eta=\langle\cos2\xi\rangle$ of a photonic polarisation qubit under a boost transverse to the beam, against rapidity, for several beam angular spreads $\delta$. The dashed line is the threshold $\eta_c\approx0.849$ below which the optimally oriented polarisation square satisfies the noncontextuality bound. The curve nearest it, at $\delta=0.2745$, sits just under the critical spread $\delta^*=0.2766$ and so saturates just above $\eta_c$.}
\label{fig:photoneta}
\end{figure}

This channel is the same as for massive particles with the roles of the axes exchanged: there the boost axis was the most damped, here the helicity axis is undamped. \Cref{sec:pm}'s placement rule is unchanged, but now sends the distinguished axis to a measurement position rather than to the corner. The optimal square therefore uses the helicity axis as a measurement axis, giving
\begin{gather}
\chi_{\rm ph}=(1+\eta^2)^2+2\eta^4,\nonumber\\
\eta_c=\left(\frac{\sqrt{10}-1}{3}\right)^{1/2}\approx0.848975,
\label{eq:chiph}
\end{gather}
while the corner-axis variant gives $6\eta^4$ with the worse threshold $(2/3)^{1/4}\approx0.9036$; the gap is $\chi_{\rm ph}-6\eta^4=(1-\eta^2)(1+3\eta^2)\geq0$. Since $\eta$ saturates in rapidity, there is a critical divergence, in analogy with the critical width $\sigma^*$:
\begin{equation}
\delta^*\approx0.2766~\text{rad}.
\label{eq:deltastar}
\end{equation}
Beams narrower than $\delta^*$ keep their polarisation contextuality at every rapidity; wider beams lose it at a finite critical rapidity, e.g., $\zeta_c\approx3.15$, $2.08$, $1.63$, $1.15$ at $\delta=0.3$, $0.35$, $0.4$, $0.5$ (\cref{fig:photoncritical}). An angular spread of $0.28$~rad is a strongly focused beam, comparable to a numerical aperture of $0.3$, so, unlike the Compton-scale localisation the massive thresholds demand, the photonic critical parameter is set by laboratory optics.

\begin{figure}[tb]
\centering
\includegraphics[width=\columnwidth]{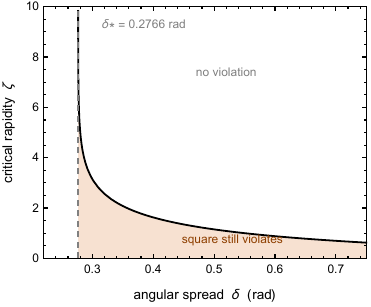}
\caption{Critical rapidity for the transversely boosted polarisation square against beam angular spread $\delta$, diverging at $\delta^*\approx0.277$~rad. Below the curve the square still violates the noncontextuality bound.}
\label{fig:photoncritical}
\end{figure}

The polarisation subspace of a photon is momentum-dependent, by transversality, so identifying a fixed qubit across the packet is itself a convention, of which the section is part; \cref{prop:collinear}, however, holds in either section, since a collinear boost leaves $\varphi$ unchanged. The invariance is a statement about boosts: observer rotations act rigidly on polarisation and change nothing operationally.

\section{Bell tests and device independence}\label{sec:bell}

The frame dependence shown above concerns theoretical attributions to the spin sector, and it should be set against certification schemes whose conclusions are meant to be observer independent. The recorded statistics of any completed experiment are the same in every inertial frame, so a Bell violation, once observed, is not open to relativistic revision. Nonlocality also differs from the contextuality certificates in what the channel does to it: because CHSH is state dependent, a detector with its own comoving analysers records the reduced value, as it does for \cref{sec:pm}'s GHZ gate, where the state-independent square gives $6$ whatever the boost has done to the state. The channel \cref{eq:channel} instead controls the visibility available in the first place when the measured degree of freedom is the spin of massive particles.

\emph{CHSH through the boost channel---}

Send a two-qubit state with correlation tensor $T_{ij}=\mathrm{Tr}[\hat{\rho}\,\hat{\sigma}_i\otimes\hat{\sigma}_j]$ through the channel: $T\mapsto M T M'$ with $M=\mathrm{diag}(\eta_\perp,\eta_\perp,\eta_z)$ for each boosted wing ($M'=\idop$ if only one wing moves). By the Horodecki criterion \cite{horodecki1995violating}, the maximal CHSH value is $2\sqrt{\mu_1^2+\mu_2^2}$ with $\mu_{1,2}$ the two largest singular values of the transformed tensor. For the singlet, $T=-\idop$, and since $\eta_\perp>\eta_z$ the optimal settings lie in the transverse plane, the same orientation rule as \cref{sec:pm}:
\begin{gather}
S_{\rm max}=2\sqrt{2}\,(1-\Delta)^2\quad\text{(both wings)},\nonumber\\
S_{\rm max}=2\sqrt{2}\,(1-\Delta)\quad\text{(one wing)}.
\label{eq:chsh}
\end{gather}
The bound $S=2$ is crossed at $\Delta=1-2^{-1/4}\approx 0.159104$ and $\Delta=1-2^{-1/2}\approx 0.292893$ respectively, corresponding to critical widths $\sigma^*_{\rm CHSH}\approx 1.003\,m$ and $2.487\,m$; representative critical rapidities for the symmetric case are $\zeta_c\approx 3.18$, $2.44$, $1.99$, $1.65$ at $\sigma/m=1.2$, $1.5$, $2$, $3$. At the thresholds where the Peres-Mermin certification fails, the same source still gives $S=2.51$ and $2.42$. The degradation needs the packet to have a spread: at sharp momentum the Wigner rotation is a rotation, and the observer recovers $2\sqrt2$ by rotating their settings \cite{terashima2003relativistic}. \Cref{tab:hierarchy} and \cref{fig:hierarchy} show the resulting hierarchy: the state-independent contextuality certificates fail first, then \cref{sec:pm}'s two-wing CHSH and the GHZ gate, and one-wing CHSH last, with the transverse orientation optimal throughout.

\begin{table}[tb]
\caption{Robustness of each spin-sector certificate to the one-parameter boost channel: the noise threshold $\Delta_c$ at which the certificate fails and the wave-packet width $\sigma^*$ above which some finite rapidity reaches it. The rows are thresholds of the same channel rather than the same experiment: the CHSH and GHZ rows, being state dependent, are what a comoving detector records, the remaining contextuality rows what a spin-only account attributes (\cref{sec:pm}). The GHZ row is Raussendorf's contextuality certificate for the OR gate, \cref{eq:ghzthreshold}. The $\sigma^*$ column inherits the convention dependence of \cref{sec:channel}. The SPIN axiom row refers to certainty of the $101$ outcome pattern, which fails for every $\Delta>0$ (\cref{sec:arbitrary}). The Yu-Oh rows depend on the pair $(\Delta_1,\Delta_2)$ and are quoted along $\Delta_2=\Delta_1^2$ with widths from the true packet moments, and are split because the boost destroys the set's state independence (\cref{sec:arbitrary}); above $\Delta_1=0.01632$ no qutrit state violates. They are for the ray set in its standard orientation, matching \cref{fig:yuoh}; the best orientation raises the thresholds by about a quarter and, since $\sigma^*\propto\sqrt{\Delta}$, the widths by about half that. The photonic square under transverse boosts has the analogous thresholds $\eta_c\approx0.849$ and $\delta^*\approx0.277$~rad (\cref{sec:photons}). The last column is the retardation of a $q=\tfrac12$ plate that emulates the channel at the threshold, $\delta_c=2\arcsin\sqrt{\Delta_c}$ (\cref{sec:emulation}); the qutrit rows have no polarisation emulation.}
\label{tab:hierarchy}
\begin{ruledtabular}
\begin{tabular}{lccc}
Certificate & $\Delta_c$ & $\sigma^*/m$ & $\delta_c$ \\
\colrule
SPIN axiom (as certainty) & $0^+$ & $0^+$ & -- \\
Yu-Oh set ($s=1$), every state & $0.008175$ & $0.1323$ & -- \\
Yu-Oh set ($s=1$), $\hat{\rho}=\idop/3$ & $0.010437$ & $0.1507$ & -- \\
PM square, boost axis measured & $0.058542$ & $0.4220$ & $28.0^\circ$ \\
PM square, transverse & $0.074382$ & $0.5029$ & $31.7^\circ$ \\
CHSH, both wings boosted & $1-2^{-1/4}$ & $1.003$ & $47.0^\circ$ \\
GHZ gate, three particles boosted & $1-2^{-1/3}$ & $1.381$ & $54.0^\circ$ \\
CHSH, one wing boosted & $1-2^{-1/2}$ & $2.487$ & $65.5^\circ$ \\
\end{tabular}
\end{ruledtabular}
\end{table}

\begin{figure}[tb]
\centering
\includegraphics[width=\columnwidth]{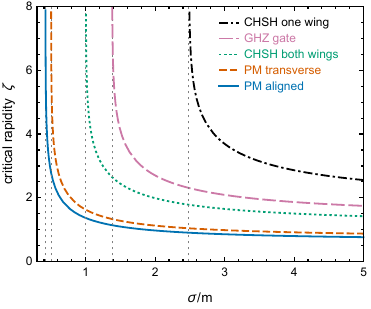}
\caption{Critical curves in the $(\sigma/m,\zeta)$ plane for the five qubit certificates of \cref{tab:hierarchy}; the qutrit ray set, whose thresholds are five to nine times tighter in $\Delta$, is shown separately in \cref{fig:yuoh}. Each certificate holds below its curve, and the ordering matches the table, with one-wing CHSH the most robust.}
\label{fig:hierarchy}
\end{figure}

\emph{Device-independent certification---}
For device-independent protocols \cite{acin2007device,brunner2014bell} the reverse holds. A device-independent certificate, of security, randomness, or a self-tested state, rests only on the observed correlations, which are frame invariant; relative motion can shrink the violation but not invalidate the conclusion drawn from it. Our analysis adds a kinematic noise floor in closed form: the boost contributes the one-parameter family $\mathrm{diag}(1-\Delta,1-\Delta,1-2\Delta)$ of local channels, so a protocol whose analysis requires $S\geq S_0$ tolerates
\begin{equation}
\Delta\leq 1-\left(\frac{S_0}{2\sqrt{2}}\right)^{1/2}
\label{eq:dibudget}
\end{equation}
per wing. Ref.~\cite{acin2007device}'s threshold for collective attacks on a depolarised singlet, a qubit error rate of $7.1\%$, corresponds to $S_0\approx2.427$ and gives $\Delta\leq0.0737$, which sits between \cref{tab:hierarchy}'s two Peres-Mermin rows, though within a percent of the upper one. Noisy preprocessing lowers the requirement further \cite{ho2020noisy}. By contrast, device-dependent arguments resting on a frame-dependent property of the spin sector, such as the contextuality assumptions examined in Ref.~\cite{camposdelgado2026frame}, inherit that frame dependence. In practice the floor sits far below current platforms (\cref{sec:discussion}). The floor itself can be set in the laboratory, on photon polarisation, with the $q$-plate of \cref{sec:emulation}.

\section{Aimed sources}\label{sec:aimed}

So far the source was fixed and the observer moved. The mechanism applies via the momentum wave function, which the source controls.

\emph{Massive particles---}
Let each particle be prepared with a momentum-conditioned twist,
\begin{equation}
\hat{U}_\kappa=\exp\left[+i\tfrac{\kappa}{2}\,\theta_W(p;\zeta^*)\,\hat{n}(\varphi)\cdot\boldsymbol{\sigma}\right],
\label{eq:twist}
\end{equation}
which at $\kappa=1$ is the inverse of the Wigner rotation \cref{eq:rotform} that a boost of rapidity $\zeta^*$ would apply, with $\kappa\in[0,1]$ the compensation strength. Twist and boost rotate about the same momentum-dependent axis $\hat{n}(\varphi)$, so their composition is a rotation by the angle difference, and the whole analysis goes through with one substitution:
\begin{gather}
\chi_{\rm aimed}(\zeta)=\chi(\Delta_{\rm eff}),\nonumber\\
\Delta_{\rm eff}=\left\langle\sin^2\!\left(\frac{\theta_W(\zeta)-\kappa\,\theta_W(\zeta^*)}{2}\right)\right\rangle .
\label{eq:aimed}
\end{gather}
We checked \cref{eq:aimed} against the full conditioned-effect calculation, with the correlated momentum state as Ref.~\cite{hance2026external}'s environment. No choice of $\kappa$ or preparation can exceed the ideal value, since $\Delta_{\rm eff}\in[0,1]$ and $\chi\leq\chi(0)=6$ there. Encodings immune to Wigner rotation have been constructed before, by choosing degrees of freedom the little group acts on trivially \cite{bartlett2005relativistically}; \cref{eq:twist} instead leaves the encoding alone and moves the frame in which it is sharp. Closer to \cref{eq:twist} is Streiter \emph{et al.}'s coherent Wigner rotation \cite{streiter2021relativistic}, a momentum-controlled rotation that restores maximal CHSH violation for momentum superpositions; that acts on the measurement settings and targets the maximum, where the twist acts at the source and targets a nominated frame.

At full compensation, $\kappa=1$, the observer at $\zeta=\zeta^*$ has $\Delta_{\rm eff}=0$ and finds $\chi=6$; the source's rest frame sees $\chi(\Delta(\zeta^*,\sigma/m))$, the value an unaimed source would show at $\zeta^*$. Aiming relocates certification, and can point at one frame only (\cref{fig:aimed}). The trade is asymmetric, because $\theta_W$ saturates through $\tanh(\zeta/2)$ and angle differences shrink at high rapidity: at $\sigma=0.5\,m$ the spin-only account at $\zeta=2$ for a source aimed at $\zeta^*=1$ gives $\chi=5.73$, against $5.38$ in the source's own rest frame. In the language of \cref{sec:context}, \cref{eq:aimed} makes the initial state of the environment a design parameter: the source engineers the momentum fluctuations so that they select sharp contexts for one chosen observer.

\begin{figure}[tb]
\centering
\includegraphics[width=\columnwidth]{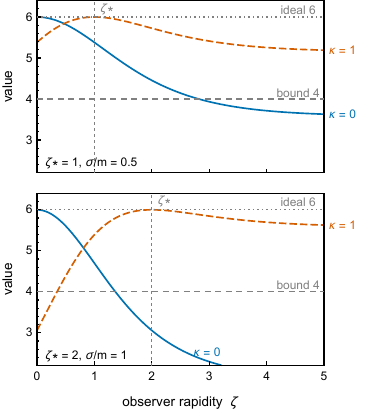}
\caption{Sequential Peres-Mermin value against observer rapidity for an unaimed source ($\kappa=0$) and a source aimed at $\zeta^*$ ($\kappa=1$), for $(\zeta^*=1,\sigma/m=0.5)$ (top) and $(\zeta^*=2,\sigma/m=1)$ (bottom). The aimed source gives the ideal value $6$ at $\zeta=\zeta^*$ and the correspondingly degraded value in its own rest frame.}
\label{fig:aimed}
\end{figure}

\emph{Photons---}
The same applies for \cref{sec:photons}'s photonic dephasing: the transverse boost acts on each helicity qubit as a phase, a Bloch rotation about the helicity axis, so a pre-twisting source and the boost rotate about the same axis for every momentum and the phases add. A source applying the momentum-conditioned helicity phase $e^{i\lambda\kappa\,\xi(p;\zeta^*)}$ leaves the observer at rapidity $\zeta$ with effective phase $\xi(p;\zeta)-\kappa\,\xi(p;\zeta^*)$, and
\begin{equation}
\eta_{\rm eff}=\left\langle\cos 2\!\left[\xi(\zeta)-\kappa\,\xi(\zeta^*)\right]\right\rangle ,
\label{eq:photonaimed}
\end{equation}
with $\chi_{\rm ph}(\eta_{\rm eff})$ from \cref{eq:chiph}, again verified against the full conditioned-effect computation. At $\kappa=1$ and $\zeta=\zeta^*$ the observer finds $\eta_{\rm eff}=1$ and $\chi_{\rm ph}=6$; everything else said in \cref{sec:aimed}, including the $\tanh(\zeta/2)$ asymmetry, carries over (\cref{fig:photonaimed}).

\begin{figure}[tb]
\centering
\includegraphics[width=\columnwidth]{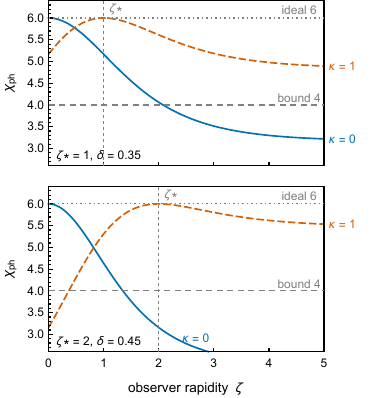}
\caption{Polarisation Peres-Mermin value against observer rapidity for an unaimed source ($\kappa=0$) and a source aimed at a transversely boosted frame $\zeta^*$ ($\kappa=1$), for $(\zeta^*=1,\delta=0.35)$ (top) and $(\zeta^*=2,\delta=0.45)$ (bottom). Both beam spreads exceed $\delta^*$, so the unaimed square fails at finite rapidity; the aimed square is ideal at $\zeta^*$.}
\label{fig:photonaimed}
\end{figure}

The photonic version also has the benefit of being potentially implementable. The phase $\xi$ is energy independent, so the compensating element acts on the angular spectrum alone, and its paraxial leading order, from \cref{eq:xiparaxial}, is a helicity-conditioned phase linear in one transverse wavevector component: a birefringent gradient element. Implementing only this linear term cancels the $\mathcal{O}(\theta)$ phase per momentum and leaves \cref{eq:xiparaxial}'s second-order term. Averaging $\cos2\xi$ over the Gaussian angular spectrum, with $\langle\theta^4\rangle=8\delta^4$ and $\langle\sin^2\varphi\cos^2\varphi\rangle=1/8$, gives
\begin{equation}
1-\eta_{\rm eff}\Big|_{\zeta=\zeta^*}\to\tfrac{1}{2}\tanh^4\!\left(\tfrac{\zeta^*}{2}\right)\delta^4
\label{eq:linres}
\end{equation}
for small $\delta$; a log-log fit of the full residual over $\delta\leq0.15$ gives an exponent of $4.09$, the excess over $4$ coming from the $\mathcal{O}(\theta^3)$ terms of \cref{eq:xiparaxial}. At $\zeta^*=1$ the shortfall $6-\chi_{\rm ph}$ is $7\times10^{-4}$ at $\delta=0.2$, $5\times10^{-3}$ at $\delta=0.3$, and $0.07$ at $\delta=0.5$, beyond the critical divergence, against a margin of $2$ above the noncontextuality bound. Aiming the photonic square is standard polarisation optics to $\delta^4$ accuracy, and it moves the critical-divergence constraint from the source frame to a neighbourhood of $\zeta^*$. The compensating element is therefore within reach; the relative motion is not, and a $\zeta^*$ of order one remains far outside anything a laboratory source and detector can be put into.

\section{Emulating the boost channel in polarisation optics}\label{sec:emulation}

\cref{tab:hierarchy}'s thresholds lie beyond any relative motion a laboratory can arrange, but the channel that produces them does not. \Cref{eq:channel} gives a rotation of a qubit by an angle $\theta_W$ about an axis $\hat{n}(\varphi)$ that turns with the azimuth of a second degree of freedom, followed by a trace over that degree of freedom. Polarisation optics has an element that does exactly this to the polarisation of a light beam, with the transverse position of the beam in the role of the momentum: the $q$-plate \cite{marrucci2006optical}, a birefringent plate whose optic axis makes the angle $\alpha(\varphi)=q\varphi+\alpha_0$ with a reference direction at azimuth $\varphi$ around the beam axis.

\emph{The map---}
A waveplate of retardation $\delta$ with its axis at angle $\alpha$ acts on the polarisation Jones vector as
\begin{equation}
\hat{J}(\delta,\alpha)=\exp\left[-i\tfrac{\delta}{2}\left(\cos2\alpha\,\hat{\sigma}_z+\sin2\alpha\,\hat{\sigma}_x\right)\right]
\label{eq:jones}
\end{equation}
in the horizontal/vertical polarisation basis, a rotation of the Poincar\'e sphere by $\delta$ about an axis in the equatorial plane of linear polarisations, at angle $2\alpha$ from the $\hat{\sigma}_z$ direction. For $q=\tfrac12$ the axis angle is $2\alpha=\varphi+2\alpha_0$, so the plate rotates the polarisation at azimuth $\varphi$ by a fixed angle about an axis that turns once with $\varphi$: this is \cref{eq:rotform} with $\theta_W\to\delta$ and the circular-polarisation direction $\hat{\sigma}_y$ in the role of the boost axis. A detector that does not resolve the transverse position traces the azimuth out, and for any beam whose intensity is azimuthally symmetric the average of \cref{eq:jones} over $\varphi$ is
\begin{equation}
M=\mathrm{diag}(1-\Delta,\,1-\Delta,\,1-2\Delta),\qquad \Delta=\sin^2(\delta/2),
\label{eq:qplatechannel}
\end{equation}
on the components $(\hat{\sigma}_z,\hat{\sigma}_x,\hat{\sigma}_y)$: the massive channel \cref{eq:etas}, exactly and without paraxial approximation, with the boost axis as the circular axis and the two linear axes transverse. The retardation of liquid-crystal $q$-plates is tuned electrically over $[0,\pi]$ in milliseconds \cite{piccirillo2010photon,dambrosio2015arbitrary}, so a voltage takes $\Delta$ from $0$ to $1$, the whole range of a random-rotation channel and well beyond the values \cref{sec:channel}'s packets reach. A uniform retardation reproduces a packet whose Wigner angle is the same at every momentum, so the higher moments are $\Delta_j=\Delta_1^j$; a radially varying retardation profile $\delta(r)$ across a beam of known intensity profile would supply the moments of a genuine packet, but is not needed for the qubit results, which depend on $\Delta_1$ alone. \cref{tab:hierarchy}'s certificate thresholds become retardations: $28.0^\circ$ and $31.7^\circ$ for the two orientations of the square, $47.0^\circ$ for CHSH with a plate in each arm, $54.0^\circ$ for the GHZ gate with a plate in each of three arms, and $65.5^\circ$ for CHSH with a plate in one arm.

The Lorentz group has appeared in polarisation optics before, as the action of attenuators and phase plates on Jones vectors \cite{han1999wigner}; here instead the boost enters only through the momentum-conditioned rotation and the trace, and the plate's transverse coordinate stands in for the momentum rather than for a Lorentz transformation.

\emph{What the emulation shows---}
Send a polarisation-entangled photon pair through a $q$-plate in each arm and measure CHSH with ordinary analysers after the plates. The analysers act as \cref{sec:bell}'s comoving devices, the recorded value is
\begin{equation}
S=2\sqrt{2}\,(1-\Delta)^2=2\sqrt{2}\cos^4(\delta/2),
\label{eq:chshqplate}
\end{equation}
and the classical bound is crossed at $\delta=47.0^\circ$, with the one-arm variant $2\sqrt2\cos^2(\delta/2)$ crossing at $65.5^\circ$. A state-independent test in the same setup behaves differently. With the two qubits carried by the polarisation and the path of a single photon, as in the photonic Peres-Mermin experiment of Amselem \emph{et al.}~\cite{amselem2009state}, a plate acting on the polarisation qubit leaves the sequential value at $6$ for every retardation, as \cref{prop:invariance} requires of sharp comoving measurements, while the CHSH value between the same two qubits falls as $2\sqrt2\cos^2(\delta/2)$. \cref{tab:hierarchy}'s two rows that differ similarly, the state-dependent certificate that a comoving detector records and the state-independent one that it does not lose, are then visible side by side as the retardation is varied. \cref{sec:pm}'s GHZ gate is emulated the same way: a three-photon polarisation GHZ state \cite{bouwmeester1999observation}, a plate in each arm and analysers after them in the two linear-polarisation bases, which stand in for $\hat{X}$ and $\hat{Y}$ and are both transverse to the plate's circular axis. Being state dependent, its success probability is recorded, $p=\tfrac12[1+\cos^6(\delta/2)]$ from \cref{eq:ghzprob}, and crosses Raussendorf's bound $3/4$ at $\delta=54.0^\circ$.

\cref{sec:aimed}'s aimed source is a second $q$-plate at the source. Its optic-axis pattern is set with $\alpha_0$ advanced by $\pi/2$, which reverses $\hat{n}(\varphi)$, and its retardation to $\kappa\delta^*$; if the two plates are imaged onto each other so that the azimuth is preserved between them, the two rotations at each azimuth share an axis and their angles subtract, giving $\Delta_{\rm eff}=\sin^2[(\delta-\kappa\delta^*)/2]$ as in \cref{eq:aimed}. At $\kappa=1$ the CHSH value $2\sqrt2\cos^4[(\delta-\delta^*)/2]$ is maximal for the detector at retardation $\delta^*$ and reduced to $2\sqrt2\cos^4(\delta^*/2)$ at $\delta=0$: this is \cref{fig:aimed}'s relocation, with the retardation in place of the observer's rapidity.

\emph{What it does not show---}
\cref{eq:chiDelta}'s sequential value $\chi(\Delta)$ is the value an observer attributes to the lab devices in the spin-only account, implemented with the L\"uders instruments of the unsharp effects $\tilde{\Pi}_\pm$. No element of the setup above implements those instruments: analysers before the plate are the lab devices and record $6$, and analysers after it are comoving devices and record $6$ as well. One could realise the instruments directly, with ancilla-assisted measurements of adjustable strength, but the experiment would then test the contextuality of unsharp measurements in the sense of Krishna \emph{et al.}~\cite{krishna2017deriving}, with the plate supplying only the value of $\Delta$. The emulation therefore makes \cref{sec:bell,sec:aimed} operational, which is where the recorded frequencies change, and leaves the frame dependence of the KS account where \cref{sec:pm} put it, in the description. \cref{sec:photons}'s transverse photonic channel, which dephases about the circular axis by an azimuth-dependent angle, has no standard optical element, though a spatially patterned polarisation rotator would provide this.

\section{Momentum as a context-selecting environment}\label{sec:context}

The transition from \cref{eq:square} to \cref{eq:noisyeffects} is an example of the mechanism analysed in Ref.~\cite{hance2026external}, where an environment selects the measurement context outcome by outcome. There, the environment-conditioned outcome states were obtained by applying the initial environmental state to the joint outcome, $\ket{\lambda(m)}_S={}_E\braket{\varphi_{\rm init}}{m}_{ES}$, and a family of environment-controlled unitaries $\hat{U}_x$ turned quantum fluctuations of the control into fluctuations of the context.

The boosted spin measurement provides a physical mechanism for this. The environment is the momentum degree of freedom of the particle itself; the initial environmental state is the wave packet $\psi(p)$; the controlled unitary is the Wigner rotation $\hat{D}(p)$ of \cref{eq:rotform}, applied to the spin conditionally on the momentum; and the coupling strength is set by the rapidity, through $\theta_W$. In the rest frame the coupling is absent and the contexts of the square are the familiar sharp ones. In the boosted frame each outcome of a nominally fixed spin measurement is correlated with the momentum fluctuations, so outcomes of a single device no longer belong to a single sharp context, and the counterfactual reasoning that assigns one context per device fails, in line with Ref.~\cite{hance2026external}'s argument that contexts cannot in general be identified with complete orthogonal bases. The loss of the Peres-Mermin violation beyond the critical curve is then not contextuality disappearing from the world (the full-space statistics still show it), but the boosted observer's spin-sector measurements ceasing to be the kind of object the KS argument speaks about.

Let us compare this to Ref.~\cite{camposdelgado2026frame}'s scenario. There, the boost acted on a fixed set of states, and frame dependence appeared because linear dependence of spin states is not preserved by \cref{eq:channel}. Here, states are irrelevant, since \cref{eq:chiDelta} is state independent, and the frame dependence sits entirely in the observables, through the failure of the operator identities. The two results are complementary faces of one fact, already visible in the non-covariance of the reduced spin state \cite{peres2002quantum,peres2004quantum}: the spin sector of a massive particle is not a Lorentz-covariant subsystem, so notions defined on it, via states or via observables, need not transform covariantly. For generalised noncontextuality, incompatibility is neither necessary nor sufficient \cite{selby2023contextuality}; for KS proofs it remains necessary. The aligned square keeps every within-context commutation and every cross-context noncommutation at every $\Delta$, and still loses its certificate, so the boost removes measurement sharpness rather than incompatibility.

\Cref{sec:pm} and the $J_z$-diagonal rule of \cref{sec:arbitrary} refine this: the environment goes beyond just adding noise, to selecting which families of spin observables still form commuting contexts at all, with \cref{sec:arbitrary}'s Free Will triple being an exception preserved by its particular operator algebra. Context selection by the momentum environment is visible at three levels: which contexts remain for a boosted observer, how sharp their outcomes are, and how much contextuality their statistics retain. \Cref{prop:planar} gives the second level ray by ray: each ray loses sharpness at a rate set by its planar spin variance.

\section{Discussion}\label{sec:discussion}

We have shown that KS contextuality, like Spekkens' contextuality, is Lorentz invariant on the full Hilbert space of a massive particle but frame dependent when restricted to spin. As \cref{sec:context} describes, the same channel is behind both, read in two pictures: it mixes the states for Spekkens' notion and de-sharpens the observables for the KS notion. The state-independent Peres-Mermin value falls from $6$ as \cref{eq:chiDelta}, crossing the classical bound along a critical curve in the $(\sigma/m,\zeta)$ plane, with the transverse orientation the most robust choice; the master formula \cref{eq:master} extends the channel to every spin, and \cref{prop:planar} gives the rate at which each ray loses sharpness, its planar spin variance; and \cref{tab:hierarchy}'s certificates fail in a fixed order, the Free Will Theorem's certainty first, then the qutrit ray set because its rays have the largest planar variances, then the square, then two-wing CHSH and the GHZ computation, and one-wing CHSH last. What a device-independent protocol concludes from an observed violation is unaffected, though how much violation is available is not.

The qubit results are governed by the single scalar $\Delta$, whose saturation value $\sigma^2/2m^2$ makes the violation robust for any source with $\sigma\ll m$: as with the frame dependence of spin entropy \cite{peres2002quantum}, observing the loss requires momentum spreads on the order of the mass, i.e.\ wave packets localised near the Compton scale. Satellite-scale relative velocities give $\Delta$ many orders of magnitude below every threshold, free massive-particle sources have $\sigma\ll m$, and photonic links are immune for motion along the beam. One system sets a suggestive scale: the momentum spread of an inner-shell electron in a hydrogen-like ion is of order $Z\alpha\,m$, so heavy ions sit at $\sigma/m\sim 0.5$--$0.7$ ($Z\alpha\approx 0.67$ for hydrogen-like uranium). That range activates the SPIN axiom, both Yu-Oh rows and the aligned Peres-Mermin square, and the transverse square from $0.503\,m$ up, but neither CHSH row: for those the packet would have to be localised at or inside the Compton wavelength. Our free-packet analysis does not carry over verbatim to bound states, but still provides an illustrative scale, and Chen \emph{et al.}'s fixed-frame violation for the relativistic hydrogen atom \cite{chen2013quantum} shows that such systems would potentially allow these tests.

Of the quantities in \cref{tab:hierarchy} the Free Will Theorem failure rate degrades fastest, being linear in $\Delta_1$ with slope $16/3$: a spread of $\sigma\approx 0.045\,m$ puts the spin-only account half a percent away from the $101$ rule at high rapidity. As in \cref{sec:pm}, that is a statement about the account and not about recorded frequencies, which give $101$ at every momentum. For photons the relevant scale is instead an angular spread, $\delta^*\approx0.28$~rad, reachable with strong focusing. One caveat applies to the massive numbers themselves: a Gaussian packet with $\sigma\gtrsim m$ is localised at or inside the Compton wavelength, so the single-particle positive-energy description is at the edge of its validity just where the thresholds live.

None of the massive thresholds are reachable by motion, but every qubit threshold in \cref{tab:hierarchy} is reachable by retardation. \Cref{sec:emulation}'s $q$-plate applies the massive channel to photon polarisation with no approximation, at $28^\circ$ to $66^\circ$ of retardation for the five qubit certificates, and a second plate at the source aims it. The experiment we would run first is the CHSH relocation: a source plate at $\delta^*$ and a detector plate swept through $\delta$, with the violation maximal at $\delta=\delta^*$ rather than at $\delta=0$. The same setup shows the state-independent square holding at $6$ while CHSH falls, and with three photons shows the GHZ gate losing its certificate at $54^\circ$, but it cannot show the spin-only value $\chi(\Delta)$, which needs the L\"uders instruments of unsharp effects that no plate supplies.

The results order the carriers of contextuality by frame robustness. Photonic polarisation under collinear boosts is the one case protected. Next come massive qubit pairs, protected for $\sigma\lesssim0.42$--$0.50\,m$; transversely boosted photons; and last the Yu-Oh set, with every higher-spin test after it: by \cref{eq:vwindow} the rays of a spin-$s$ set lose sharpness at a rate between $s^{2/3}$ and $s^2$ times $\Delta_1$, so thresholds fall at least as fast as $s^{-2/3}$ and at most as fast as $s^{-2}$, with a random ray set at the fast end. The KS theorem's original dimension, $d=3$, is therefore where its relativistic certification is weakest, while the qubit pair, where the theorem needs the two-particle square rather than rays, is the most robust massive carrier.

Aimed sources recast this frame-dependence. The sharpness of the spin-sector contexts is a resource that sits in some frame, and \cref{eq:aimed} shows the source choosing which; \cref{tab:hierarchy}'s hierarchy then measures how far from that frame each certificate keeps working. Open directions include how aiming interacts with device-independent protocols when the two wings of a Bell pair are aimed at different frames, the bound-state version of the analysis for heavy ions, repeating the construction for other choices of relativistic spin operator \cite{bauke2014relativistic,saldanha2012physical}, and three raised by the additions above: the smallest average planar variance an orthogonality graph permits, which fixes the exponent with which KS thresholds fall with spin; how far the spin-only account depends on the instrument chosen for the unsharp effects, since the L\"uders instruments used here are the canonical choice but not the only one; and a patterned polarisation rotator that would emulate the transverse photonic channel as the $q$-plate emulates the massive one.

\textit{Data availability -} A Wolfram Mathematica notebook reproducing all computations in this paper is openly available on Zenodo~\cite{HanceCamposDelgado2026Notebook}.

\textit{Acknowledgements -} JRH acknowledges support from a Royal Society Research Grant (RG/R1/251590), an EPSRC Mathematical Sciences Small Grant (UKRI3647), and from their EPSRC Quantum Technologies Career Acceleration Fellowship (UKRI1217).

\bibliographystyle{unsrturl}
\bibliography{ref}

@article{kochen1967problem,
  title={The problem of hidden variables in quantum mechanics},
  author={Kochen, Simon and Specker, Ernst P.},
  journal={J. Math. Mech.},
  volume={17},
  number={1},
  pages={59--87},
  year={1967},
  doi={10.1512/iumj.1968.17.17004}
}

@article{bell1966problem,
  title={On the problem of hidden variables in quantum mechanics},
  author={Bell, John S.},
  journal={Rev. Mod. Phys.},
  volume={38},
  pages={447--452},
  year={1966},
  doi={10.1103/RevModPhys.38.447}
}

@article{spekkens2005contextuality,
  title={Contextuality for preparations, transformations, and unsharp measurements},
  author={Spekkens, Robert W.},
  journal={Phys. Rev. A},
  volume={71},
  pages={052108},
  year={2005},
  doi={10.1103/PhysRevA.71.052108}
}

@article{camposdelgado2026frame,
  title={Frame dependence of Spekkens' contextuality for relativistic spin systems},
  author={Campos Delgado, Ruben},
  journal={Phys. Rev. A},
  volume={113},
  pages={012202},
  year={2026},
  doi={10.1103/crqz-jcsh}
}

@article{peres1990incompatible,
  title={Incompatible results of quantum measurements},
  author={Peres, Asher},
  journal={Phys. Lett. A},
  volume={151},
  number={3-4},
  pages={107--108},
  year={1990},
  doi={10.1016/0375-9601(90)90172-K}
}

@article{mermin1990simple,
  title={Simple unified form for the major no-hidden-variables theorems},
  author={Mermin, N. David},
  journal={Phys. Rev. Lett.},
  volume={65},
  pages={3373--3376},
  year={1990},
  doi={10.1103/PhysRevLett.65.3373}
}

@article{cabello2008experimentally,
  title={Experimentally testable state-independent quantum contextuality},
  author={Cabello, Ad{\'a}n},
  journal={Phys. Rev. Lett.},
  volume={101},
  pages={210401},
  year={2008},
  doi={10.1103/PhysRevLett.101.210401}
}

@article{budroni2022kochen,
  title={Kochen-Specker contextuality},
  author={Budroni, Costantino and Cabello, Ad{\'a}n and G{\"u}hne, Otfried and Kleinmann, Matthias and Larsson, Jan-{\AA}ke},
  journal={Rev. Mod. Phys.},
  volume={94},
  pages={045007},
  year={2022},
  doi={10.1103/RevModPhys.94.045007}
}

@article{guhne2010compatibility,
  title={Compatibility and noncontextuality for sequential measurements},
  author={G{\"u}hne, Otfried and Kleinmann, Matthias and Cabello, Ad{\'a}n and Larsson, Jan-{\AA}ke and Kirchmair, Gerhard and Z{\"a}hringer, Florian and Gerritsma, Rene and Roos, Christian F.},
  journal={Phys. Rev. A},
  volume={81},
  pages={022121},
  year={2010},
  doi={10.1103/PhysRevA.81.022121}
}

@article{wigner1939unitary,
  title={On unitary representations of the inhomogeneous {L}orentz group},
  author={Wigner, Eugene P.},
  journal={Ann. Math.},
  volume={40},
  number={1},
  pages={149--204},
  year={1939},
  doi={10.2307/1968551}
}

@book{weinberg1995quantum,
  title={The Quantum Theory of Fields, Volume 1: Foundations},
  author={Weinberg, Steven},
  year={1995},
  publisher={Cambridge University Press},
  address={Cambridge},
  doi={10.1017/CBO9781139644167}
}

@article{peres2002quantum,
  title={Quantum entropy and special relativity},
  author={Peres, Asher and Scudo, Petra F. and Terno, Daniel R.},
  journal={Phys. Rev. Lett.},
  volume={88},
  pages={230402},
  year={2002},
  doi={10.1103/PhysRevLett.88.230402}
}

@article{peres2004quantum,
  title={Quantum information and relativity theory},
  author={Peres, Asher and Terno, Daniel R.},
  journal={Rev. Mod. Phys.},
  volume={76},
  pages={93--123},
  year={2004},
  doi={10.1103/RevModPhys.76.93}
}

@article{gingrich2002quantum,
  title={Quantum entanglement of moving bodies},
  author={Gingrich, Robert M. and Adami, Christoph},
  journal={Phys. Rev. Lett.},
  volume={89},
  pages={270402},
  year={2002},
  doi={10.1103/PhysRevLett.89.270402}
}

@article{czachor1997einstein,
  title={Einstein-{P}odolsky-{R}osen-{B}ohm experiment with relativistic massive particles},
  author={Czachor, Marek},
  journal={Phys. Rev. A},
  volume={55},
  pages={72--77},
  year={1997},
  doi={10.1103/PhysRevA.55.72}
}

@article{terashima2003einstein,
  title={Einstein-{P}odolsky-{R}osen correlation seen from moving observers},
  author={Terashima, Hiroaki and Ueda, Masahito},
  journal={Quantum Inf. Comput.},
  volume={3},
  number={3},
  pages={224--228},
  year={2003},
  doi={10.26421/QIC3.3-4}
}

@article{terashima2003relativistic,
  title={Relativistic {E}instein-{P}odolsky-{R}osen correlation and {B}ell's inequality},
  author={Terashima, Hiroaki and Ueda, Masahito},
  journal={Int. J. Quantum Inf.},
  volume={1},
  number={1},
  pages={93--114},
  year={2003},
  doi={10.1142/S0219749903000061}
}

@misc{churilov2026fixed,
  title={Fixed-boost {W}igner noise: strict trace-distance contraction without quantum degradability},
  author={Churilov, Maxim V.},
  year={2026},
  eprint={2607.12994},
  archivePrefix={arXiv},
  primaryClass={quant-ph}
}

@article{chen2013quantum,
  title={Quantum contextuality for a relativistic spin-1/2 particle},
  author={Chen, Jing-Ling and Su, Hong-Yi and Wu, Chunfeng and Deng, Dong-Ling and Cabello, Ad{\'a}n and Kwek, Leong Chuan and Oh, Choo Hiap},
  journal={Phys. Rev. A},
  volume={87},
  pages={022109},
  year={2013},
  doi={10.1103/PhysRevA.87.022109}
}

@article{meyer1999finite,
  title={Finite precision measurement nullifies the {K}ochen-{S}pecker theorem},
  author={Meyer, David A.},
  journal={Phys. Rev. Lett.},
  volume={83},
  pages={3751--3754},
  year={1999},
  doi={10.1103/PhysRevLett.83.3751}
}

@article{kent1999noncontextual,
  title={Noncontextual hidden variables and physical measurements},
  author={Kent, Adrian},
  journal={Phys. Rev. Lett.},
  volume={83},
  pages={3755--3757},
  year={1999},
  doi={10.1103/PhysRevLett.83.3755}
}

@article{clifton2000simulating,
  title={Simulating quantum mechanics by non-contextual hidden variables},
  author={Clifton, Rob and Kent, Adrian},
  journal={Proc. R. Soc. A},
  volume={456},
  pages={2101--2114},
  year={2000},
  doi={10.1098/rspa.2000.0604}
}

@article{barrett2004non,
  title={Non-contextuality, finite precision measurement and the {K}ochen-{S}pecker theorem},
  author={Barrett, Jonathan and Kent, Adrian},
  journal={Stud. Hist. Philos. Sci. B},
  volume={35},
  number={2},
  pages={151--176},
  year={2004},
  doi={10.1016/j.shpsb.2003.10.003}
}

@article{spekkens2014status,
  title={The status of determinism in proofs of the impossibility of a noncontextual model of quantum theory},
  author={Spekkens, Robert W.},
  journal={Found. Phys.},
  volume={44},
  pages={1125--1155},
  year={2014},
  doi={10.1007/s10701-014-9833-x}
}

@article{selby2023contextuality,
  title={Contextuality without incompatibility},
  author={Selby, John H. and Schmid, David and Wolfe, Elie and Sainz, Ana Bel{\'e}n and Kunjwal, Ravi and Spekkens, Robert W.},
  journal={Phys. Rev. Lett.},
  volume={130},
  pages={230201},
  year={2023},
  doi={10.1103/PhysRevLett.130.230201}
}

@article{hance2026external,
  title={External quantum fluctuations select measurement contexts},
  author={Hance, Jonte R. and Ji, Ming and Matsushita, Tomonori and Hofmann, Holger F.},
  journal={Quantum},
  volume={10},
  pages={2106},
  year={2026},
  doi={10.22331/q-2026-05-13-2106}
}

@article{saldanha2012physical,
  title={Physical interpretation of the {W}igner rotations and its implications for relativistic quantum information},
  author={Saldanha, Pablo L. and Vedral, Vlatko},
  journal={New J. Phys.},
  volume={14},
  pages={023041},
  year={2012},
  doi={10.1088/1367-2630/14/2/023041}
}

@article{bauke2014relativistic,
  title={What is the relativistic spin operator?},
  author={Bauke, Heiko and Ahrens, Sven and Keitel, Christoph H. and Grobe, Rainer},
  journal={New J. Phys.},
  volume={16},
  pages={043012},
  year={2014},
  doi={10.1088/1367-2630/16/4/043012}
}

@article{horodecki1995violating,
  title={Violating {B}ell inequality by mixed spin-1/2 states: necessary and sufficient condition},
  author={Horodecki, Ryszard and Horodecki, Pawe{\l} and Horodecki, Micha{\l}},
  journal={Phys. Lett. A},
  volume={200},
  number={5},
  pages={340--344},
  year={1995},
  doi={10.1016/0375-9601(95)00214-N}
}

@article{brunner2014bell,
  title={Bell nonlocality},
  author={Brunner, Nicolas and Cavalcanti, Daniel and Pironio, Stefano and Scarani, Valerio and Wehner, Stephanie},
  journal={Rev. Mod. Phys.},
  volume={86},
  pages={419--478},
  year={2014},
  doi={10.1103/RevModPhys.86.419}
}

@article{acin2007device,
  title={Device-independent security of quantum cryptography against collective attacks},
  author={Ac{\'\i}n, Antonio and Brunner, Nicolas and Gisin, Nicolas and Massar, Serge and Pironio, Stefano and Scarani, Valerio},
  journal={Phys. Rev. Lett.},
  volume={98},
  pages={230501},
  year={2007},
  doi={10.1103/PhysRevLett.98.230501}
}

@article{conway2006free,
  title={The free will theorem},
  author={Conway, John H. and Kochen, Simon},
  journal={Found. Phys.},
  volume={36},
  number={10},
  pages={1441--1473},
  year={2006},
  doi={10.1007/s10701-006-9068-6}
}

@article{yu2012state,
  title={State-independent proof of {K}ochen-{S}pecker theorem with 13 rays},
  author={Yu, Sixia and Oh, Choo Hiap},
  journal={Phys. Rev. Lett.},
  volume={108},
  pages={030402},
  year={2012},
  doi={10.1103/PhysRevLett.108.030402}
}

@misc{jokinen2026generalised,
  title={Generalised contextuality of continuous variable quantum theory can be revealed with a single projective measurement},
  author={Jokinen, Pauli and Weilenmann, Mirjam and Pl{\'a}vala, Martin and Pellonp{\"a}{\"a}, Juha-Pekka and Kiukas, Jukka and Uola, Roope},
  year={2026},
  eprint={2601.14067},
  archivePrefix={arXiv},
  primaryClass={quant-ph}
}

@article{cabello2016thirteen,
  title={Quantum state-independent contextuality requires 13 rays},
  author={Cabello, Ad{\'a}n and Kleinmann, Matthias and Portillo, Jos{\'e} R.},
  journal={J. Phys. A},
  volume={49},
  number={38},
  pages={38LT01},
  year={2016},
  doi={10.1088/1751-8113/49/38/38LT01}
}

@article{krishna2017deriving,
  title={Deriving robust noncontextuality inequalities from algebraic proofs of the {K}ochen-{S}pecker theorem: the {P}eres-{M}ermin square},
  author={Krishna, Anirudh and Spekkens, Robert W. and Wolfe, Elie},
  journal={New J. Phys.},
  volume={19},
  pages={123031},
  year={2017},
  doi={10.1088/1367-2630/aa9168}
}

@article{cabello2002finite,
  title={Finite-precision measurement does not nullify the {K}ochen-{S}pecker theorem},
  author={Cabello, Ad{\'a}n},
  journal={Phys. Rev. A},
  volume={65},
  pages={052101},
  year={2002},
  doi={10.1103/PhysRevA.65.052101}
}

@article{bartlett2005relativistically,
  title={Relativistically invariant quantum information},
  author={Bartlett, Stephen D. and Terno, Daniel R.},
  journal={Phys. Rev. A},
  volume={71},
  pages={012302},
  year={2005},
  doi={10.1103/PhysRevA.71.012302}
}

@article{gingrich2003entangled,
  title={Entangled light in moving frames},
  author={Gingrich, Robert M. and Bergou, Attila J. and Adami, Christoph},
  journal={Phys. Rev. A},
  volume={68},
  pages={042102},
  year={2003},
  doi={10.1103/PhysRevA.68.042102}
}

@article{ho2020noisy,
  title={Noisy preprocessing facilitates a photonic realization of device-independent quantum key distribution},
  author={Ho, Melvyn and Sekatski, Pavel and Tan, Ernest Y.-Z. and Renner, Renato and Bancal, Jean-Daniel and Sangouard, Nicolas},
  journal={Phys. Rev. Lett.},
  volume={124},
  pages={230502},
  year={2020},
  doi={10.1103/PhysRevLett.124.230502}
}

@article{hermens2014conway,
  title={{C}onway-{K}ochen and the finite precision loophole},
  author={Hermens, Ronnie},
  journal={Found. Phys.},
  volume={44},
  number={10},
  pages={1038--1048},
  year={2014},
  doi={10.1007/s10701-014-9827-8}
}

@article{cator2014constraints,
  title={Constraints on determinism: {B}ell versus {C}onway-{K}ochen},
  author={Cator, Eric and Landsman, Klaas},
  journal={Found. Phys.},
  volume={44},
  number={7},
  pages={781--791},
  year={2014},
  doi={10.1007/s10701-014-9815-z}
}

@misc{dutra2026contextuality,
  title={How quantum contextuality disappears in the classical limit},
  author={Dutra, Arthur C. R. and Baldij{\~a}o, Roberto D. and Terra Cunha, Marcelo},
  year={2026},
  eprint={2604.24735},
  archivePrefix={arXiv},
  primaryClass={quant-ph}
}

@article{streiter2021relativistic,
  title={Relativistic {B}ell test within quantum reference frames},
  author={Streiter, Lucas F. and Giacomini, Flaminia and Brukner, {\v{C}}aslav},
  journal={Phys. Rev. Lett.},
  volume={126},
  pages={230403},
  year={2021},
  doi={10.1103/PhysRevLett.126.230403}
}

@article{lindner2005focusing,
  title={The effect of focusing on polarization qubits},
  author={Lindner, Netanel H. and Terno, Daniel R.},
  journal={J. Mod. Opt.},
  volume={52},
  number={8},
  pages={1177--1188},
  year={2005},
  doi={10.1080/09500340512331327570}
}

@book{Halpern1968,
  author       = {Francis R. Halpern},
  title        = "{Special Relativity and Quantum Mechanics}",
  publisher    = {Prentice-Hall},
  address      = {Englewood Cliffs, New Jersey},
  year         = {1968},
  isbn         = {0138271135, 9780138271138},
}

@article{Raussendorf:2001zim,
    author = "Raussendorf, Robert and Briegel, Hans J.",
    title = "{A One-Way Quantum Computer}",
    doi = "10.1103/PhysRevLett.86.5188",
    journal = "Phys. Rev. Lett.",
    volume = "86",
    number = "22",
    pages = "5188",
    year = "2001"
}

@article{Raussendorf:2012tof,
    author = "Raussendorf, Robert and Wei, Tzu-Chieh",
    title = "{Quantum Computation by Local Measurement}",
    eprint = "1208.0041",
    archivePrefix = "arXiv",
    primaryClass = "quant-ph",
    doi = "10.1146/annurev-conmatphys-020911-125041",
    journal = "Ann. Rev. Condensed Matter Phys.",
    volume = "3",
    number = "1",
    pages = "239--261",
    year = "2012"
}

@article{Raussendorf:2013wih,
    author = "Raussendorf, Robert",
    title = "{Contextuality in measurement-based quantum computation}",
    eprint = "0907.5449",
    archivePrefix = "arXiv",
    primaryClass = "quant-ph",
    doi = "10.1103/PhysRevA.88.022322",
    journal = "Phys. Rev. A",
    volume = "88",
    number = "2",
    pages = "022322",
    year = "2013"
}

@article{Anders:2009zzs,
    author = "Anders, Janet and Browne, Dan E.",
    title = "{Computational Power of Correlations}",
    eprint = "0805.1002",
    archivePrefix = "arXiv",
    primaryClass = "quant-ph",
    doi = "10.1103/PhysRevLett.102.050502",
    journal = "Phys. Rev. Lett.",
    volume = "102",
    number = "5",
    pages = "050502",
    year = "2009"
}

@article{marrucci2006optical,
  title={Optical spin-to-orbital angular momentum conversion in inhomogeneous anisotropic media},
  author={Marrucci, L. and Manni, C. and Paparo, D.},
  journal={Phys. Rev. Lett.}, volume={96}, pages={163905}, year={2006},
  doi={10.1103/PhysRevLett.96.163905}}

@article{piccirillo2010photon,
  title={Photon spin-to-orbital angular momentum conversion via an electrically tunable $q$-plate},
  author={Piccirillo, Bruno and D'Ambrosio, Vincenzo and Slussarenko, Sergei and Marrucci, Lorenzo and Santamato, Enrico},
  journal={Appl. Phys. Lett.}, volume={97}, pages={241104}, year={2010},
  doi={10.1063/1.3527083}}

@article{dambrosio2015arbitrary,
  title={Arbitrary, direct and deterministic manipulation of vector beams via electrically-tuned $q$-plates},
  author={D'Ambrosio, Vincenzo and Baccari, Flavio and Slussarenko, Sergei and Marrucci, Lorenzo and Sciarrino, Fabio},
  journal={Sci. Rep.}, volume={5}, pages={7840}, year={2015},
  doi={10.1038/srep07840}}

@article{han1999wigner,
  title={Wigner rotations and Iwasawa decompositions in polarization optics},
  author={Han, D. and Kim, Y. S. and Noz, M. E.},
  journal={Phys. Rev. E}, volume={60}, pages={1036}, year={1999},
  doi={10.1103/PhysRevE.60.1036}}

@article{amselem2009state,
  title={State-independent quantum contextuality with single photons},
  author={Amselem, Elias and R{\aa}dmark, Magnus and Bourennane, Mohamed and Cabello, Ad{\'a}n},
  journal={Phys. Rev. Lett.}, volume={103}, pages={160405}, year={2009},
  doi={10.1103/PhysRevLett.103.160405}}

@article{nielsen2002simple,
  title={A simple formula for the average gate fidelity of a quantum dynamical operation},
  author={Nielsen, Michael A.},
  journal={Phys. Lett. A}, volume={303}, pages={249--252}, year={2002},
  doi={10.1016/S0375-9601(02)01272-0}}

@article{he2011planar,
  title={Planar quantum squeezing and atom interferometry},
  author={He, Q. Y. and Peng, Shi-Guo and Drummond, P. D. and Reid, M. D.},
  journal={Phys. Rev. A}, volume={84}, pages={022107}, year={2011},
  doi={10.1103/PhysRevA.84.022107}}

@misc{bozzetto2026warringcontextualitiesprovably,
      title={Warring Contextualities - Provably Classical vs Provably Nonclassical},
      author={Enrico Bozzetto and Jonte R. Hance},
      year={2026},
      eprint={2604.14319},
      archivePrefix={arXiv},
      primaryClass={quant-ph},
}

@article{bouwmeester1999observation,
  title={Observation of three-photon Greenberger-Horne-Zeilinger entanglement},
  author={Bouwmeester, Dik and Pan, Jian-Wei and Daniell, Matthew and Weinfurter, Harald and Zeilinger, Anton},
  journal={Phys. Rev. Lett.}, volume={82}, pages={1345--1349}, year={1999},
  doi={10.1103/PhysRevLett.82.1345}}

@misc{HanceCamposDelgado2026Notebook,
  author = {Jonte R. Hance and Ruben Campos Delgado},
  title = {Code for ``Frame dependence of Kochen-Specker contextuality for relativistic spin systems'''},
  year = {2026},
  howpublished = {Zenodo},
  doi = {10.5281/zenodo.22942308},
  url = {https://doi.org/10.5281/zenodo.22942308}
}

\appendix

\section{A noncontextual model beyond the critical curve}\label{app:ncmodel}

\Cref{sec:pm} shows that the boosted Peres-Mermin value satisfies $\chi\leq4$ beyond the critical curve. That is one inequality; here we show that for the maximally mixed state nothing else obstructs a noncontextual assignment, so the whole boosted behaviour of the square admits a noncontextual model there.

Write $t_k$ for the boosted expectation of the $k$th triple, and note that on the maximally mixed state the single-observable expectations and the in-context pairs all vanish. Here $t_k=\varsigma_k g_k$, with $\varsigma=(1,1,1,1,1,-1)$ the signs of \cref{eq:chi} and $g_k$ the product of shrink factors along the $k$th triple in \cref{fig:square}.

A deterministic noncontextual assignment is a vector $v\in\{\pm1\}^9$, one value per observable, and it yields triple values whose product is $(\prod_i v_i)^2=+1$, since each observable sits in exactly one row and one column. The map $v\mapsto(t_k)$ has rank $5$ over $\mathrm{GF}(2)$, because the six rows of the incidence matrix sum to zero mod $2$; its image is therefore exactly the $32$ even-parity vertices of $\{\pm1\}^6$. Averaging over the $16$ sign flips in the kernel removes the singles and the in-context pairs without changing the $t_k$, so the reachable set is the convex hull of those vertices.

That hull has $44$ facets: the $12$ of the form $|t_k|\leq1$, and the $32$ of the form $\sum_k\varepsilon_kt_k\leq4$ with $\prod_k\varepsilon_k=-1$. Since $g_k\geq0$, the binding one is $\varepsilon=\varsigma$, which reads $\chi\leq4$; every other facet is slack. A linear program over the $512$ assignments confirms feasibility for $\chi\leq4$ and infeasibility above it.

\section{Choice of massless section}\label{app:section}

For a massless particle, the polarisation subspace is momentum-dependent, so a fixed helicity qubit across a wave packet exists only relative to a choice of section $L(p)$ carrying the reference momentum to $p$. \Cref{sec:photons} uses the geodesic section $\hat{R}_z(\varphi)\hat{R}_y(\theta)\hat{R}_z(-\varphi)\hat{B}_z$. The other common choice, the standard section $\hat{R}_z(\varphi)\hat{R}_y(\theta)\hat{B}_z$, gives a different answer.

The standard section contains a rotation about the momentum itself in the limit $\theta\to0$, so the basis it defines winds by $\varphi$ around the beam axis and is singular on the axis. A beam centred on that axis covers all $\varphi$, and the resulting phase spread is $\varphi$-uniform rather than $\theta$-controlled: the section shows complete dephasing for any angular spread $\delta$ once the boost's aberration exceeds $\delta$. At $\delta=0.2$ and $\zeta=3$ the two sections give $\eta=0.93$ and $\eta=0.02$.

The geodesic section, the minimal rotation carrying $\hat{z}$ to $\hat{p}$, is smooth on the axis and differs from the standard one only by a smooth $\varphi$-dependent phase away from it. It is the basis in which the transverse field of a paraxial beam is defined, and so the one a lens or waveplate acts in; the same identification underlies Lindner and Terno's treatment of focusing~\cite{lindner2005focusing}. The dephasing quoted in \cref{sec:photons} is therefore the one a polarisation analyser sees, and the complete dephasing of the standard section is an artefact of its basis rather than a physical effect. \Cref{prop:collinear} is unaffected either way, since a collinear boost leaves $\varphi$ unchanged.

\end{document}